\theoremstyle{plain}
\newtheorem*{supermaintheorem*}{Main Theorem}
\newcommand{\exfoldcov}[3]{{#1}^{#2}_{#3}}
\newcommand{\abs}[1]       {\mm{\left|{#1}\right|}}
\newcommand {\mm}[1] {\ifmmode{#1}\else{\mbox{\(#1\)}}\fi}
\newcommand{\ignore}[1]{}
\newsavebox{\smallProofsym}                            
\newcommand{\Rspace} {\mm{{\mathbb R}}}
\newcommand{\Zspace} {\mm{{\mathbb Z}}}
\newcommand{\domain}[3]     {\mm{{\rm dom}_{#1}{({#2},{#3})}}}
\newcommand{\zone}[3]       {\mm{{\rm zone}_{#1}{({#2},{#3})}}}
\newcommand{\ball}[2]       {\mm{{B}{({#2};{#1})}}}
\newcommand{\Radius}[1]   {\mm{\rm Rad}{({#1})}}
\newcommand{\Edist}[2]      {\mm{\|{#1}-{#2}\|}}
\newcommand{\infdist}[2]    {\mm{{\|{#1}-{#2}\|}_{\infty}}}
\newcommand{\dBottleneck}[2]  {\mm{{d_B}{({#1},{#2})}}}
\newcommand{\dFingerprint}[2] {\mm{{d_\infty}{({#1},{#2})}}}
\newcommand{\Fingerprint}[1] {\mm{\Psi}_{#1}}
\newcommand{\probA}[2]      {\mm{{\varphi}_{#1}^{#2}}}
\newcommand{\probB}[2]      {\mm{{\psi}_{#1}^{#2}}}
\newcommand{\Prob}[1]       {\mm{{\rm Prob}{[{#1}]}}}
\newcommand{\volume}[1]     {\mm{{\rm Vol}{[{#1}]}}}
\newcommand{\Isometry}      {\mm{{\rm iso}}}
\newcommand{\card}[1]       {\mm{{\#}{#1}}}
\newcommand{\norm}[1]       {\mm{\|{#1}\|}}
\newcommand{\dd}            {\mm{\delta}}
\newcommand{\ee}            {\mm{\varepsilon}}
\title{The Density Fingerprint of a Periodic Point Set}
\authorrunning{H. Edelsbrunner, T. Heiss, V. Kurlin, P. Smith, and M. Wintraecken}
\author{Herbert Edelsbrunner}{IST Austria (Institute of Science and Technology Austria),
  Kloster\-neu\-burg, Austria}{Herbert.Edelsbrunner@ist.ac.at}
  {https://orcid.org/0000-0002-9823-6833}{ERC Horizon 2020 `Alpha Shape Theory Extended', no.\ 788183;  FWF `Wittgenstein Prize', no.\ Z 342-N31;  FWF DFG TRR 109 `Discretization in Geometry and Dynamics', no.\ I 02979-N35.}
\author{Teresa Heiss}{IST Austria (Institute of Science and Technology Austria), Kloster\-neu\-burg, Austria}{Teresa.Heiss@ist.ac.at}{https://orcid.org/0000-0002-1780-2689}{ERC Horizon 2020 `Alpha Shape Theory Extended', no.\ 788183.}
\author{Vitaliy Kurlin}{Department of Computer Science, University of Liverpool, Liverpool, United Kingdom}{Vitaliy.Kurlin@liverpool.ac.uk}{https://orcid.org/0000-0001-5328-5351}{EPSRC grant `Application-driven Topological Data Analysis' (EP/R018472/1).}
\author{Philip Smith}{Department of Computer Science, University of Liverpool, Liverpool, United Kingdom}{Philip.Smith@liverpool.ac.uk}{https://orcid.org/0000-0003-3001-0245}{Leverhulme Research Centre for Functional Materials Design at the University of Liverpool, UK.}
\author{Mathijs Wintraecken}{IST Austria (Institute of Science and Technology Austria),
  Kloster\-neu\-burg, Austria}{Mathijs.Wintraecken@ist.ac.at}
  {https://orcid.org/0000-0002-7472-2220}{the European Union's Horizon 2020 research and innovation programme under the Marie Sk{\l}odowska-Curie grant agreement No. 754411. }
\keywords{Lattices, periodic sets, isometries, Dirichlet--Voronoi domains, Brillouin zones, bottleneck distance, stability, continuity, crystal database.}
\begin{document}
\maketitle

\begin{abstract}
  Modeling a crystal as a periodic point set, we present a fingerprint consisting of density functions that facilitates the efficient search for new materials and material properties.
  We prove invariance under isometries, continuity, and completeness in the generic case, which are necessary features for the reliable comparison of crystals.
  The proof of continuity 
  integrates methods from discrete geometry and lattice theory, while the proof of generic completeness combines techniques from geometry with analysis.
  The fingerprint has a fast algorithm based on Brillouin zones and related inclusion-exclusion formulae.
  We have implemented the algorithm and describe its application to crystal structure prediction.
\end{abstract}

\section{Introduction}
\label{sec:Introduction}

This paper considers a deceptively simple question:  \emph{given periodic point sets (crystals) in $\Rspace^3$, determine how close the sets are to being isometric.}
In other words, how much do the points need to be perturbed to allow for a rigid transformation between the two sets?
More generally, we may ask for the organization of a collection of periodic sets that facilitates efficient search.
In this context, a \emph{periodic (point) set} is the Minkowski sum of a \emph{lattice} and a \emph{motif}.
The lattice is spanned by three linearly independent vectors, and the motif is a finite set of points in the \emph{unit cell}, which is the parallelepiped whose edges are translates of the three vectors. 
The main reason for the difficulty of the question is the complicated nature of the continuous space of isometry classes of periodic sets:
\medskip \begin{itemize} 
  \item There is no method for choosing a unique basis for a lattice in a continuous manner.  Indeed, continuity contradicts uniqueness as we can continuously deform a basis to a different basis of the same lattice.
  For example, the Niggli reduced cell \cite{Nig28} is unique but not continuous with respect to perturbations of the lattice \cite{ABP80}.
  \item Crystallographers often use the symmetry group of crystals, which define a stratification of the space of isometry classes.
  Belonging to a given stratum is however not a continuous property.
  \item Small perturbations of a periodic set can significantly change the lattice with respect to which it is periodic.
\end{itemize} \medskip
Periodic sets are usually given by a basis of a lattice and the motif within the unit cell spanned by this basis.
As explained above, even for extremely similar periodic sets, their lattices, their bases, and thus their motifs can look completely different,
making a direct comparison between the motifs impossible.
It is therefore advisable not to compute the distances between the isometry classes but instead map the sets to a less complicated metric space.
This is the approach we take in this paper.
Part of the challenge is to determine which properties this map should possess, and how to balance its mathematical properties with efficient computability.
As in many applications, we consider false positives in the comparison of two crystals less problematic than false negatives: a large distance in the metric space should imply that the two periodic sets are far from isometric, while a small distance should indicate a high chance that the sets are indeed close to being isometric.

\medskip
The main contribution of this paper is a candidate solution, which we refer to as the \emph{density fingerprint map}.
For different non-negative integers, $k$, and for different non-negative radii, $t$, it maps the periodic set to the probability that a random point in the unit cell is at distance at most $t$ from exactly $k$ points in the periodic set; see Definition \ref{def:density_functions_and_fingerprint}.
\begin{supermaintheorem*}
\label{thm:main_theorem_introduction}
  The density fingerprint maps a periodic set in $\Rspace^3$ to a series of density functions that satisfy the following properties:
  \medskip \begin{enumerate}
    \item the map is invariant under isometries (rigid motions and reflections) of space;
    \item the map is Lipschitz continuous with respect to small perturbations of the points, with the Lipschitz constant depending on the packing and covering radii of the periodic set;
    \item the map is generically complete: for a dense and open subset of the space of periodic sets, the isometry class of the periodic set is uniquely determined.
  \end{enumerate}
\end{supermaintheorem*}
\medskip
Short of proving completeness beyond the generic case, we leave the completeness of the density fingerprint map for all periodic sets as an open question.
Indeed, the authors of this paper failed to produce a counterexample to general completeness, but not because of a lack of trying.
The Main Theorem generalizes to arbitrary finite dimensions.

\medskip
The quest for a fingerprint map is motivated by the study of crystals, for which the configuration of atoms is important for their chemical properties.
Quantifying the similarity between crystals has the potential to greatly improve the practice of \emph{Crystal Structure Prediction}, which has come to rely on high-performance computing for simulating millions of structures.
Many of them are similar to each other, and very few are eventually produced in the laboratory.
An effective notion of similarity would allow an improved organization of the structures and lead to vast savings of supercomputing time currently wasted on redundant simulations.
The prior work in this area is best summarized by listing the software systems currently used in practice:
\textsc{Compack} \cite{ChMo05}, \textsc{Mercury} \cite{MEMP06}, and \textsc{Compstru} \cite{FOTPA16}.
These systems are of great help in comparing crystals, but they employ heuristics like cut-offs and tolerances, which come with the usual drawbacks.
It is our ambition to develop the mathematical and computational foundations needed to overcome the current deficiencies.

\medskip
\textbf{Outline.}
Section \ref{sec:Background} provides the necessary notation and terminology for lattices, periodic sets, and isometries.
Section \ref{sec:Fingerprint} introduces the density functions for a periodic set and the corresponding density fingerprint map.
Section \ref{sec:Stability} proves that the density fingerprint map is continuous with respect to perturbations of the periodic set.
Section \ref{sec:Completeness} proves that the density fingerprint map is complete for generic periodic sets.
Section \ref{sec:Computation} explains how the density fingerprint is computed using the Brillouin zones of the points.
Section \ref{sec:Appication} describes a preliminary application of the density fingerprint map to Crystal Structure Prediction.
Section \ref{sec:Discussion} concludes the paper.

\section{Background} 
\label{sec:Background}

We cover two topics: locally finite point sets modeling crystals and transformations between them.

\subsection{Delone Sets and Periodic Sets}
\label{sec:Delone_Periodic}

We recall that $A \subseteq \Rspace^3$ is \emph{locally finite} if any compact
subset of $\Rspace^3$ contains only finitely many points of $A$.
It is a \emph{Delone set} \cite{DLS98} if
there exist $r, R > 0$ such that every open ball of radius $r$ contains
at most one point of $A$ and every closed ball of radius $R$ contains
at least one point of $A$.
In other words, no two points of $A$ can be closer than $2r$
and no point of $\Rspace^3$ can be further from $A$ than $R$.
We refer to the largest such $r$ as the \emph{packing radius}
and the smallest such $R$ as the \emph{covering radius} of $A$.
A Delone set is necessarily infinitely large and its points
are in a sense evenly spread out over the entire Euclidean space.

\medskip
We get an important subclass starting with three linearly independent vectors, $v_1, v_2, v_3 \in \Rspace^3$, which we call a \emph{basis}.
The set of integer combinations is the \emph{lattice}, $\Lambda$, and the set of real combinations with coefficients in $[0, 1)$ is the \emph{unit cell}, $U$,
the vectors span:
\begin{align}
  \Lambda  &=  \{ n_1 v_1 + n_2 v_2 + n_3 v_3 \mid n_1,n_2,n_3 \in \Zspace \} , \\
  U        &=  \{ r_1 v_1 + r_2 v_2 + r_3 v_3 \mid 0 \leq r_1, r_2, r_3 < 1 \} .
\end{align}
We call any finite set $M \subseteq U$ a \emph{motif} and $M + \Lambda = \{x+v \mid x \in M, v \in \Lambda \}$ a \emph{periodic (point) set}.
By construction, $M + \Lambda + v = M + \Lambda$ for every $v \in \Lambda$, and if this exhausts all translations that keep $M + \Lambda$ invariant, then $U$ is a \emph{primitive unit cell} of $M + \Lambda$.
Since the International Union of Crystallography (IUCr) allows lattices that are spanned by fewer than three linearly independent vectors, it calls lattices as defined above \emph{full}.
We observe that $M + \Lambda$ is a Delone set if $\Lambda$ is full and not if $\Lambda$ is not full, and therefore we will be exclusively interested in full lattices and so will assume as much without mentioning the term.

\medskip
It is important to keep in mind that the basis and therefore the primitive unit cell are not unique.
This is illustrated in Figure \ref{fig:unit-cell},
which shows three of the infinitely many bases of the hexagonal lattice:  $a$ together with $b-a$, $b$, or $b+a$.
Applying Niggli's algorithm for the \emph{Niggli reduced cell} \cite{Nig28} to this particular lattice, there is an ambiguity between the bases $\{a, b\}$ and $\{a, b-a\}$, because the projections of $b$ and $b-a$ onto the line of $a$ both have length $\tfrac{1}{2} \norm{a}$.
The tie can be broken by preferring $b$, but this causes a discontinuity in the construction of the reduced unit cell.

\begin{figure}[hbt]
  \centering \resizebox{!}{1.8in}{\input{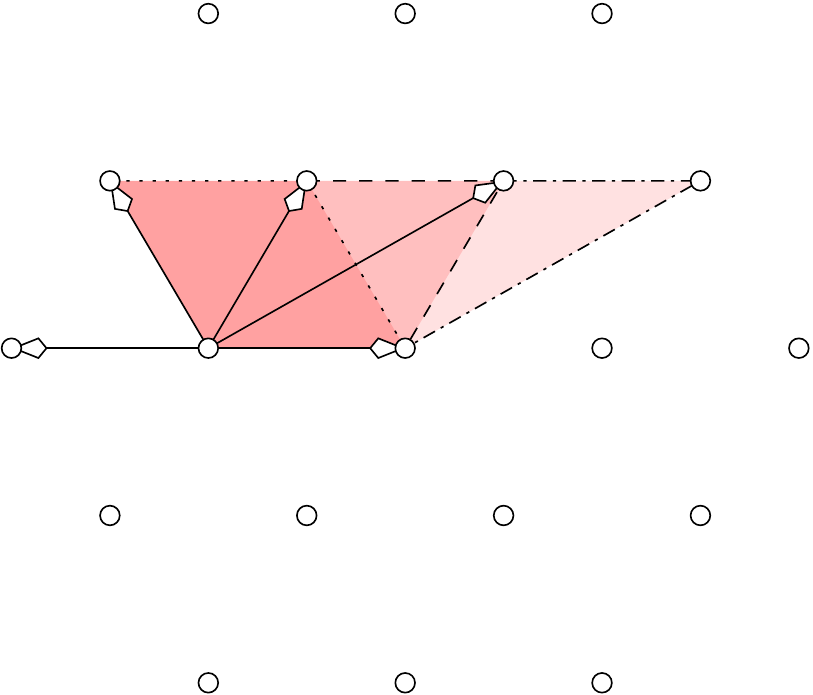_t}}
  \caption{The bases $\{a, b-a\}$, $\{a, b\}$, $\{a, b+a\}$ of the hexagonal lattice, and the corresponding unit cells drawn as shaded parallelograms.}
  \label{fig:unit-cell}
\end{figure}

\subsection{Rigid Motions and Isometries}
\label{sec:RigidMotion_Isometry}

Rather than a fixed set in $\Rspace^3$, we often consider the class of sets that are equivalent under a particular type of transformation.
For example, a \emph{rigid motion} is a map $\Rspace^3 \to \Rspace^3$ that is composed of a rotation and a translation.
It preserves distances between pairs of points as well as orientations of ordered triplets of points.
An \emph{isometry} is a rigid motion possibly composed with a reflection, and so preserves distances but not necessarily orientations.
This is the most relevant group of transformations to this paper as we model crystals by isometry classes of periodic sets.

\section{The Density Fingerprint and its Invariance}
\label{sec:Fingerprint}

A continuous invariant for comparing crystals is the density, defined as the total volume of balls centered at points in the motif divided by the volume of the unit cell.
To avoid the choice of radii, we grow the balls continuously and simultaneously from their centers and get a $1$-dimensional function rather than a single number.
There are still many periodic sets this function cannot distinguish, for example any hexagonal close packing from the face-centered cubic lattice.
We therefore add information by distinguishing points covered by a different number of balls.
\begin{definition}[Density Functions and Fingerprint]
  \label{def:density_functions_and_fingerprint}
  Let $A = M + \Lambda \subset \Rspace^3$ be a periodic set and write $A (t)$ for the collection of closed balls, $\ball{t}{a}$, of radius $t \geq 0$ centered at the points $a \in A$.
  The \emph{$k$-fold cover} of $A(t)$, denoted $\bigcup^k A(t)$, consists of all points $x \in \Rspace^3$ contained in $k$ or more of these balls.
  The fractional volume of the $k$-fold cover, $ \probA{k}{A}(t) = \volume{U \cap \bigcup^k A(t)} / \volume{U}$, is also the probability that a point chosen uniformly at random within a unit cell, $U$, belongs to at least $k$ balls, and subtracting the fractional volume of the $(k+1)$-fold cover, we get the probability that the random point belongs to exactly $k$ balls:
  \begin{align}
    \probA{k}{A}(t)  &=  \Prob{x \in \ball{t}{a} \mbox{\rm ~for~} k \mbox{\rm ~or more points~} a \in A} ; 
    \label{eqn:probA} \\
    \probB{k}{A}(t) &= \probA{k}{A}(t)-\probA{k+1}{A}(t)  = \Prob{x \in \ball{t}{a} \mbox{\rm ~for exactly~} k    \mbox{\rm ~points~} a \in A} .
    \label{eqn:probB}
  \end{align}
  We call $\probB{k}{A} \colon [0, \infty) \to [0,1]$ the \emph{$k$-th density function} of $A$.
  The \emph{density fingerprint} of $A$ is the vector of density functions:  $\Fingerprint{}(A) = ( \probB{0}{A}, \probB{1}{A}, \ldots, \probB{k}{A}, \ldots )$, and
  $A \mapsto \Fingerprint{}(A)$ is the \emph{density fingerprint map}.
\end{definition}
See Figure \ref{fig:hexagonalsquare}, which illustrates the density functions for the hexagonal and the square lattices in $\Rspace^2$.
Note that the density fingerprint is an isometry invariant and that it neither depends on the lattice used to write $A$ as a periodic set, nor on its basis.
\begin{lemma}[Invariance under Isometries]
  Let $A \subseteq \Rspace^3$ be a periodic set, and let $Q \subseteq \Rspace^3$ be isometric to $A$.
  Then $\Fingerprint{}(A) = \Fingerprint{}(Q)$.
\end{lemma}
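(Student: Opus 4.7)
The plan is to secure the two ingredients that make the density fingerprint isometry-invariant: first, that the value $\probA{k}{A}(t)$ does not depend on the choice of basis, lattice, or unit cell used to present $A$ as a periodic set; and second, that isometries commute with every ingredient (balls, $k$-fold covers, unit cells, Lebesgue volume) appearing in the definition. Given both, the equality $\Fingerprint{}(A) = \Fingerprint{}(Q)$ follows from a short calculation.

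For the first step, fix any lattice $\Lambda$ with $A + \Lambda = A$ and note that the $k$-fold cover $\bigcup^k A(t)$ is $\Lambda$-periodic: translating $x$ by $v \in \Lambda$ sends the collection of balls $\ball{t}{a}$ containing $x$ to the collection containing $x + v$ via the bijection $a \mapsto a + v$ of $A$ with itself. Hence the indicator function of $\bigcup^k A(t)$ is $\Lambda$-periodic, and its average over any fundamental domain of $\Lambda$ equals, for instance, the global density $\lim_{R\to\infty}\volume{\ball{R}{0}\cap\bigcup^k A(t)}/\volume{\ball{R}{0}}$. Therefore $\probA{k}{A}(t)$, and by \eqref{eqn:probB} also $\probB{k}{A}(t)$, depends only on the set $A$.

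For the second step, let $g \colon \Rspace^3 \to \Rspace^3$ be an isometry with $Q = g(A)$. Since $g$ preserves distances, $g(\ball{t}{a}) = \ball{t}{g(a)}$, so $Q(t) = g(A(t))$ and consequently $\bigcup^k Q(t) = g(\bigcup^k A(t))$: the bijection $g$ identifies the balls of $Q(t)$ containing $g(x)$ with those of $A(t)$ containing $x$. If $A = M + \Lambda$ with unit cell $U$, then $g(U)$ is a unit cell for $Q = g(M) + g(\Lambda)$, and because every isometry of $\Rspace^3$ is affine with orthogonal linear part, it preserves Lebesgue measure. Combining these facts,
$$ \volume{g(U) \cap \bigcup^k Q(t)} = \volume{g\bigl(U \cap \bigcup^k A(t)\bigr)} = \volume{U \cap \bigcup^k A(t)}, $$
and $\volume{g(U)} = \volume{U}$, giving $\probA{k}{Q}(t) = \probA{k}{A}(t)$ for all $k \geq 0$ and $t \geq 0$. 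Taking differences yields $\probB{k}{Q} = \probB{k}{A}$, and hence $\Fingerprint{}(Q) = \Fingerprint{}(A)$.

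I do not anticipate any real obstacle here. The only point that deserves a line of justification is the well-definedness of $\probA{k}{A}(t)$ under the different presentations of $A$ as a Minkowski sum $M + \Lambda$, and this is settled at once by the $\Lambda$-periodicity of $\bigcup^k A(t)$ observed above.
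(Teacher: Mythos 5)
Your proof is correct and follows the same route as the paper's: the isometry carries $A(t)$ to $Q(t)$ and hence the $k$-fold covers onto each other, from which equality of the density functions follows. You additionally spell out the well-definedness of $\probA{k}{A}(t)$ under different lattice presentations and the measure-preservation of isometries, details the paper only remarks on in the surrounding text; this is a welcome elaboration, not a different argument.
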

\begin{proof}
  Let now $\Isometry \colon \Rspace^3 \to \Rspace^3$ be the isometry for which $Q = \Isometry (A)$, and note that it also maps $A(t)$ to $Q(t)$ and $\bigcup^k A(t)$ to $\bigcup^k Q(t)$ for every $k \geq 0$.
  It follows that $\probB{k}{A}(t) = \probB{k}{Q}(t)$, for every $k \geq 0$, and therefore $\Fingerprint{}(A) = \Fingerprint{}(Q)$, as claimed.
\end{proof}

While the fingerprint map is not invariant under similarities, we can write
$\Fingerprint{}(sA)  =  ( \probB{0}{A} \circ s, \probB{1}{A} \circ s, \ldots, \probB{k}{A} \circ s, \ldots )$,
in which $s(t) = st$ scales the radius.
It would therefore be easy to construct a fingerprint map that is invariant under similarities, 
namely by normalizing the radius,
e.g.\ by letting the radius be $tr$, in which $r$ is the packing radius of $A$.

\begin{figure}[ht]
  \parbox{59mm}{
  \includegraphics[height=28mm]{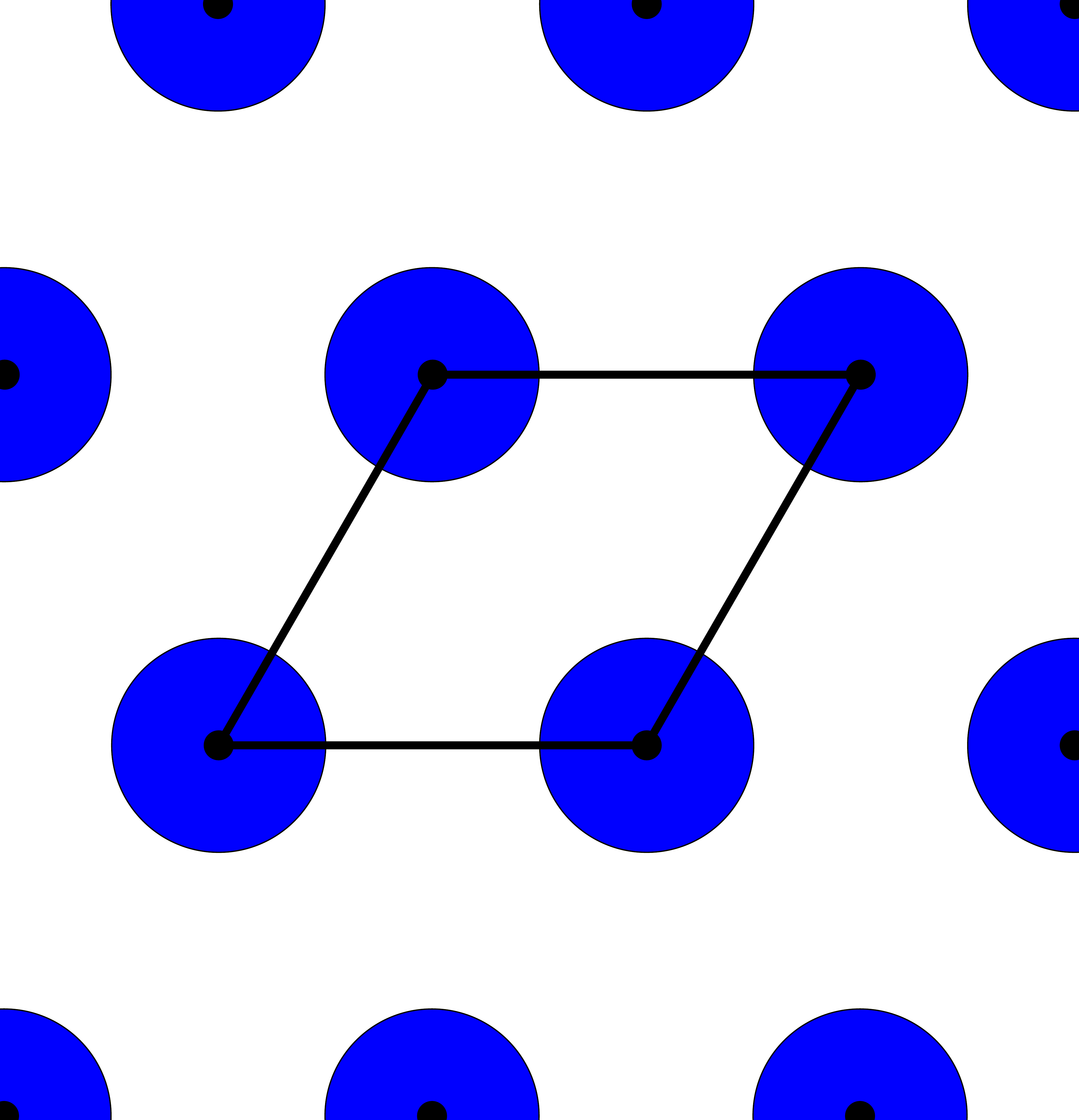}
  \hspace*{1mm}
  \includegraphics[height=28mm]{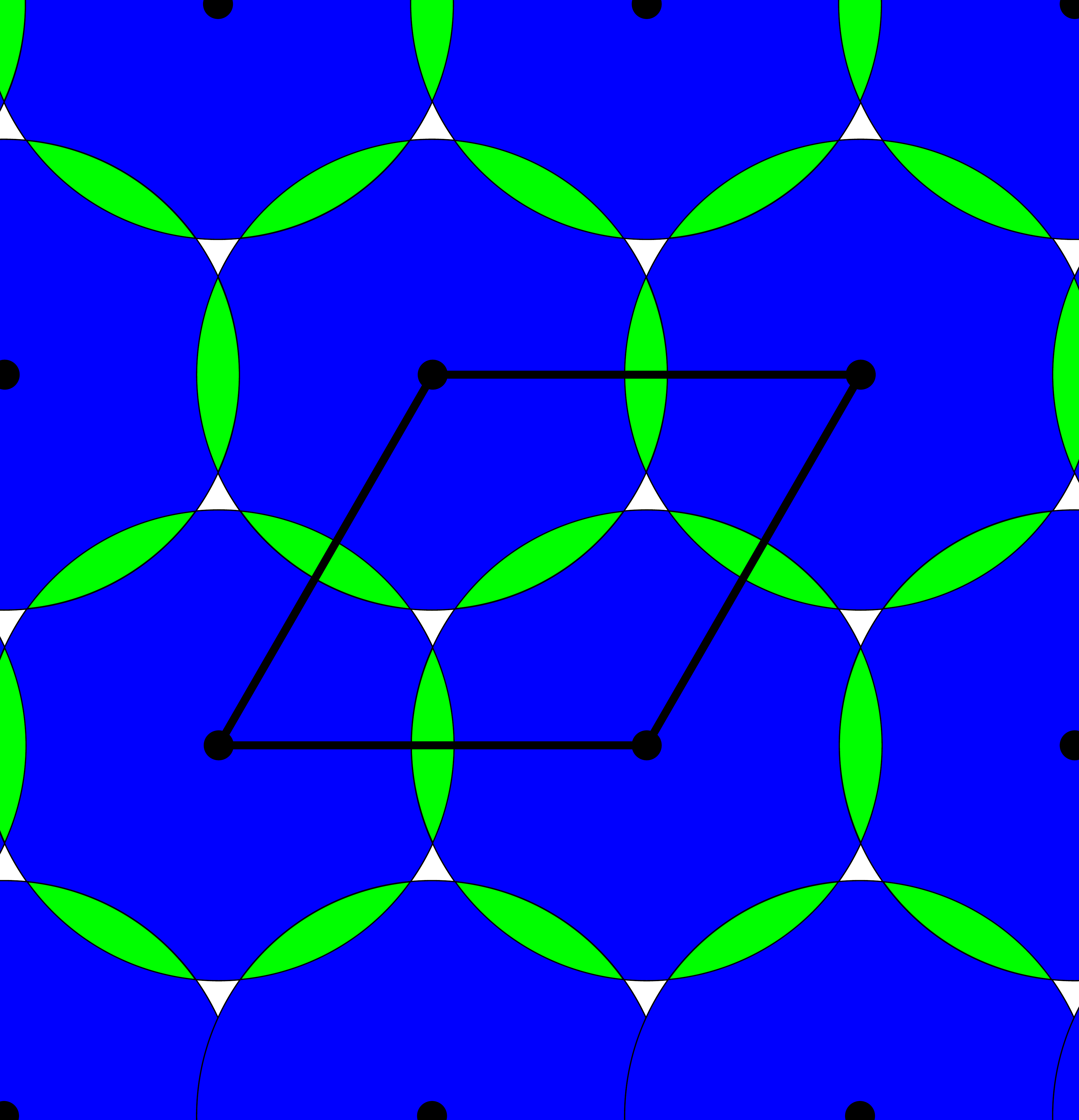}
  \medskip
  \includegraphics[height=28mm]{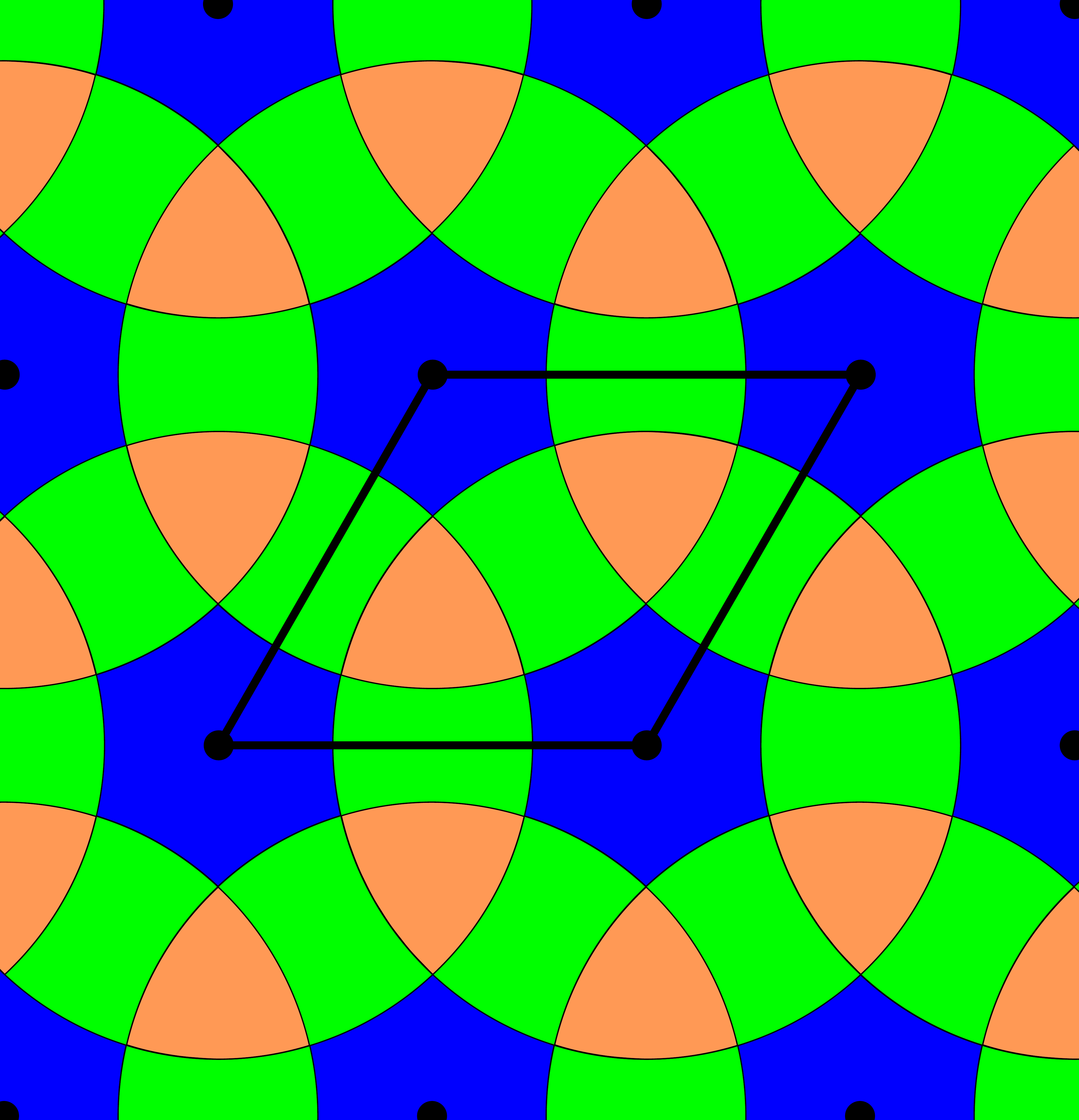}
  \hspace*{3mm}
  \includegraphics[height=28mm]{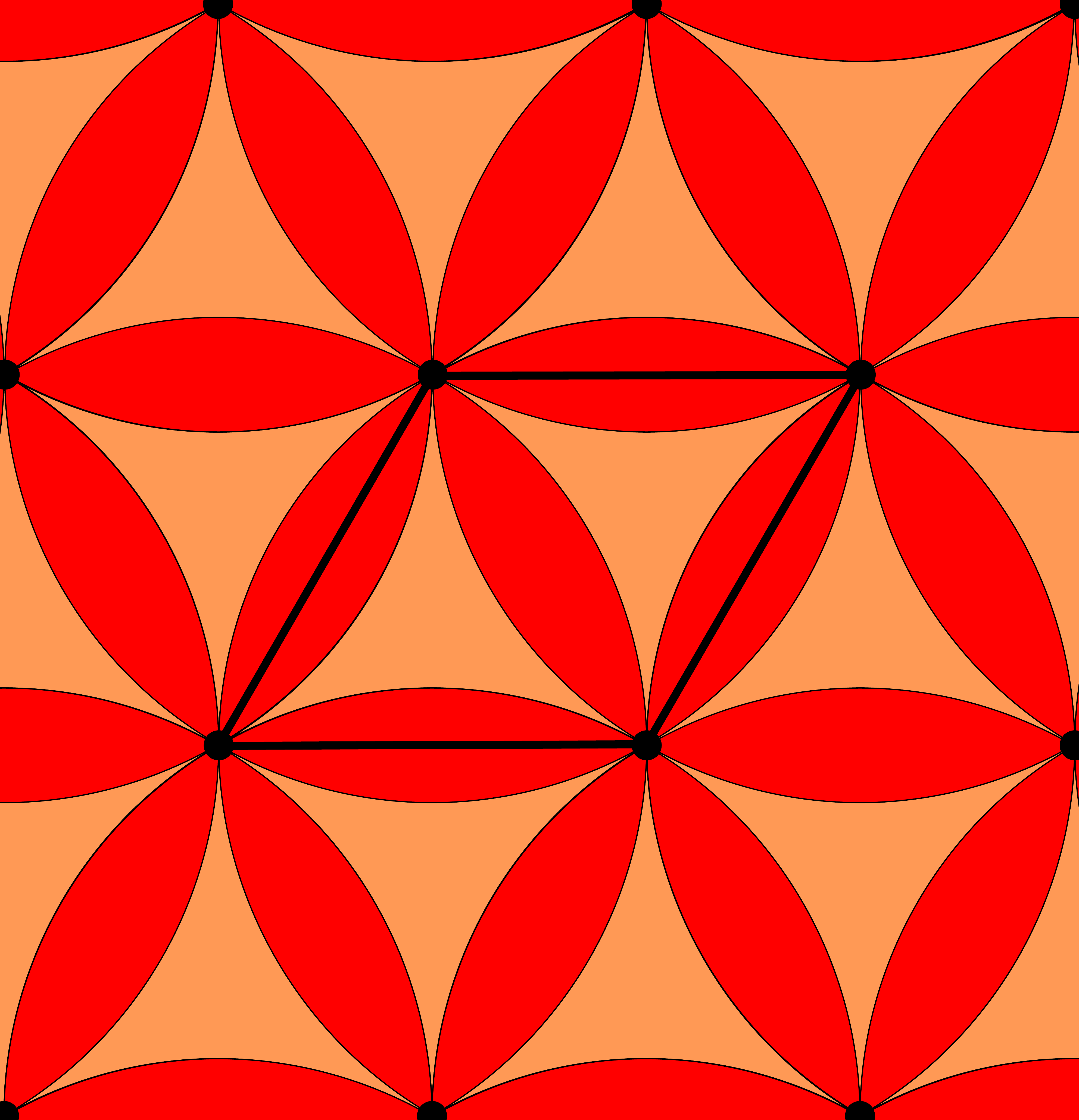}}
  \parbox{80mm}{
  \includegraphics[width=80mm]{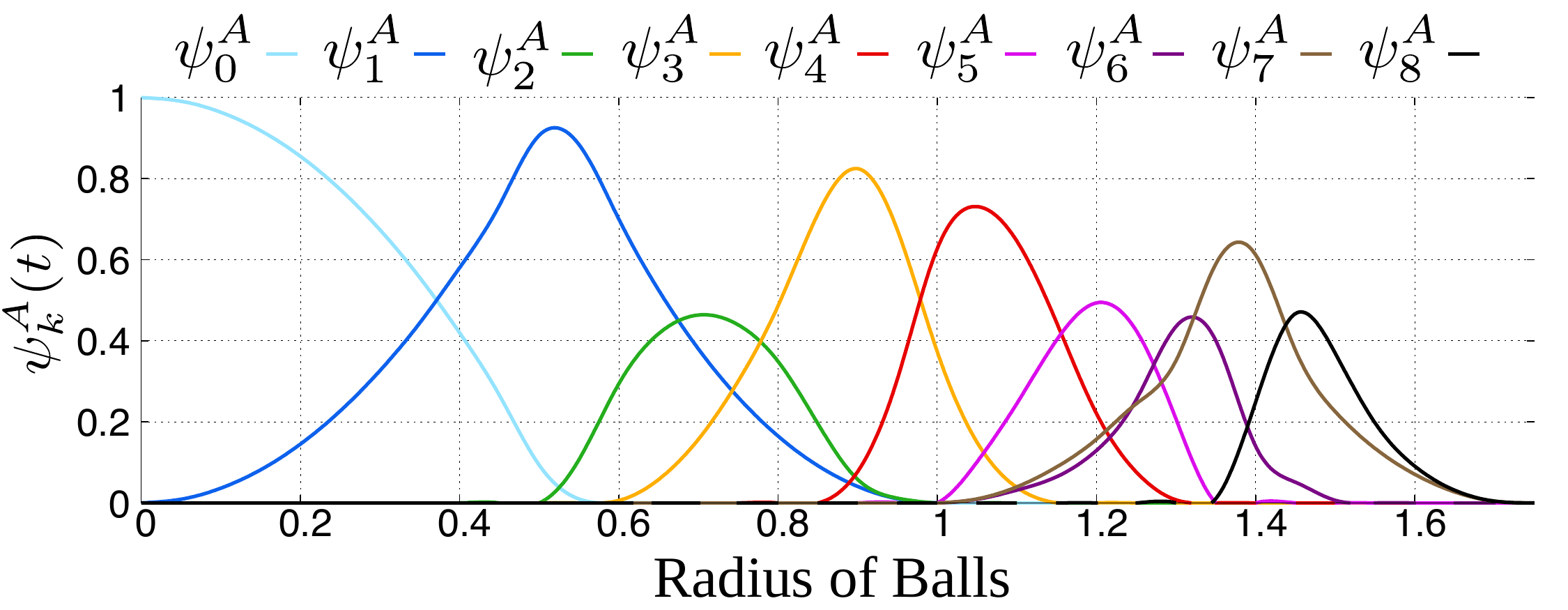}
  \includegraphics[width=80mm]{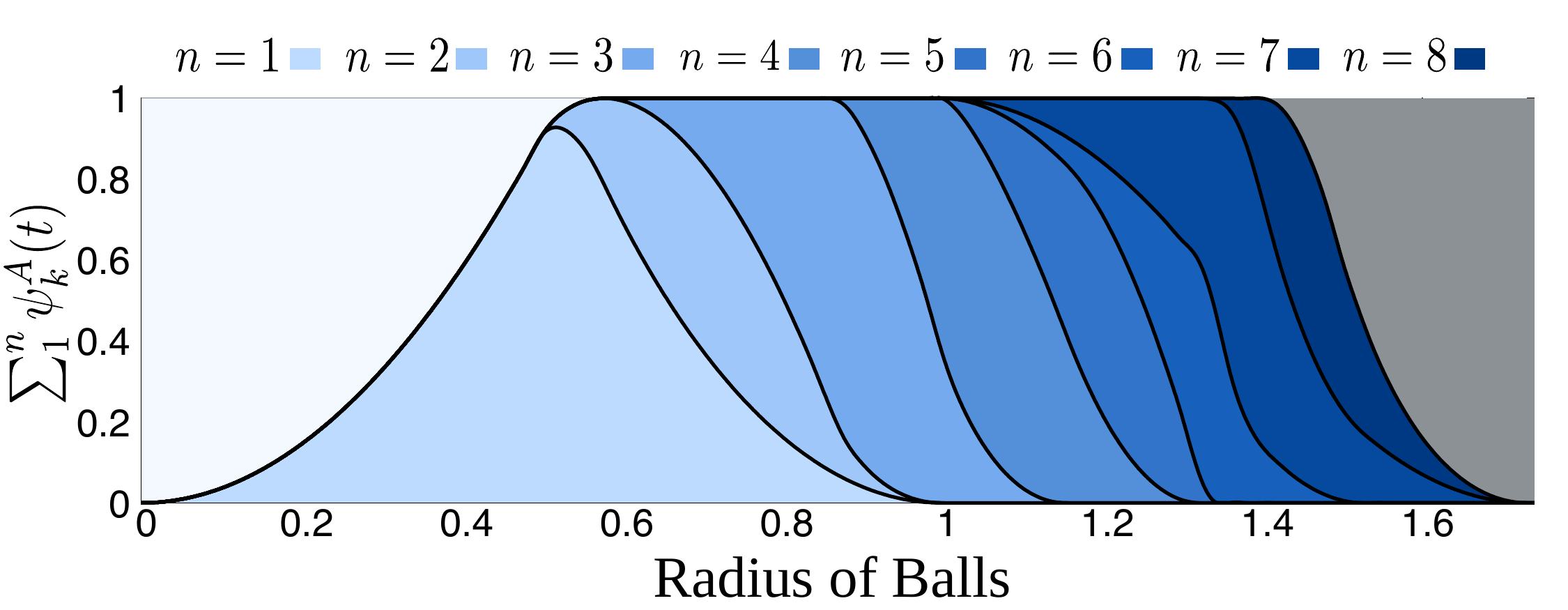}}
  
  \vspace*{1mm}
  
  \parbox{59mm}{
  \includegraphics[height=28mm]{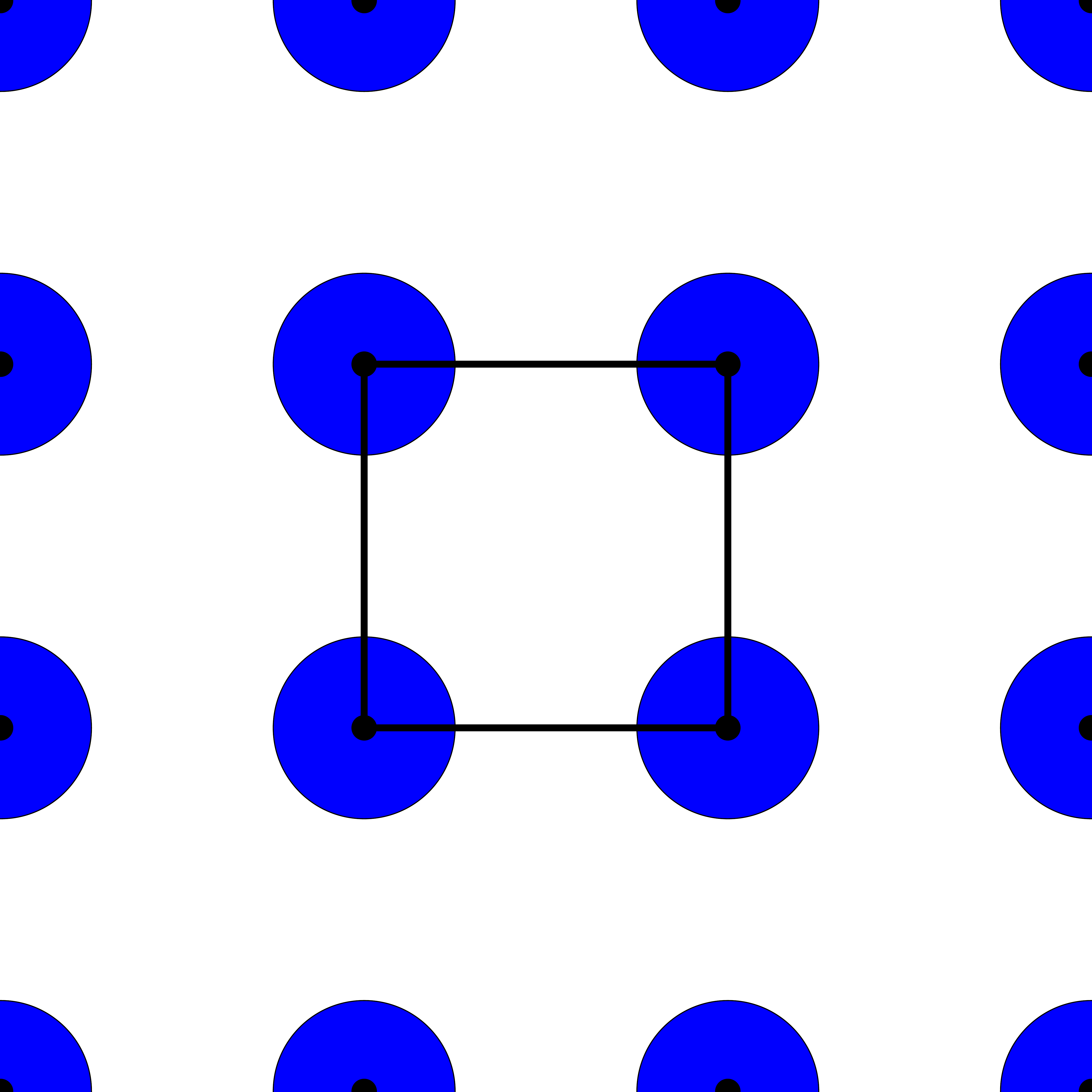}
  \hspace*{1mm}
  \includegraphics[height=28mm]{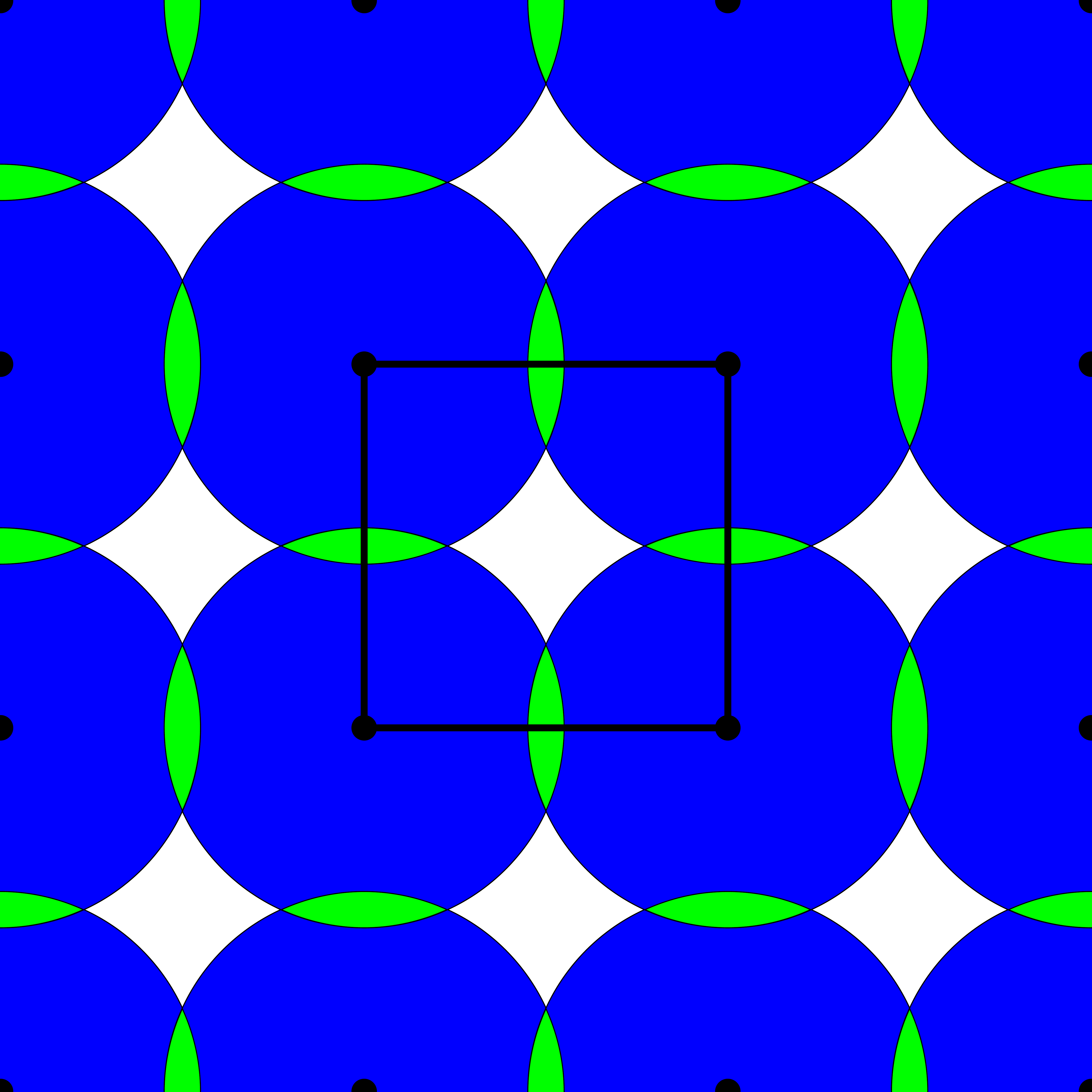}
  \medskip
  \includegraphics[height=28mm]{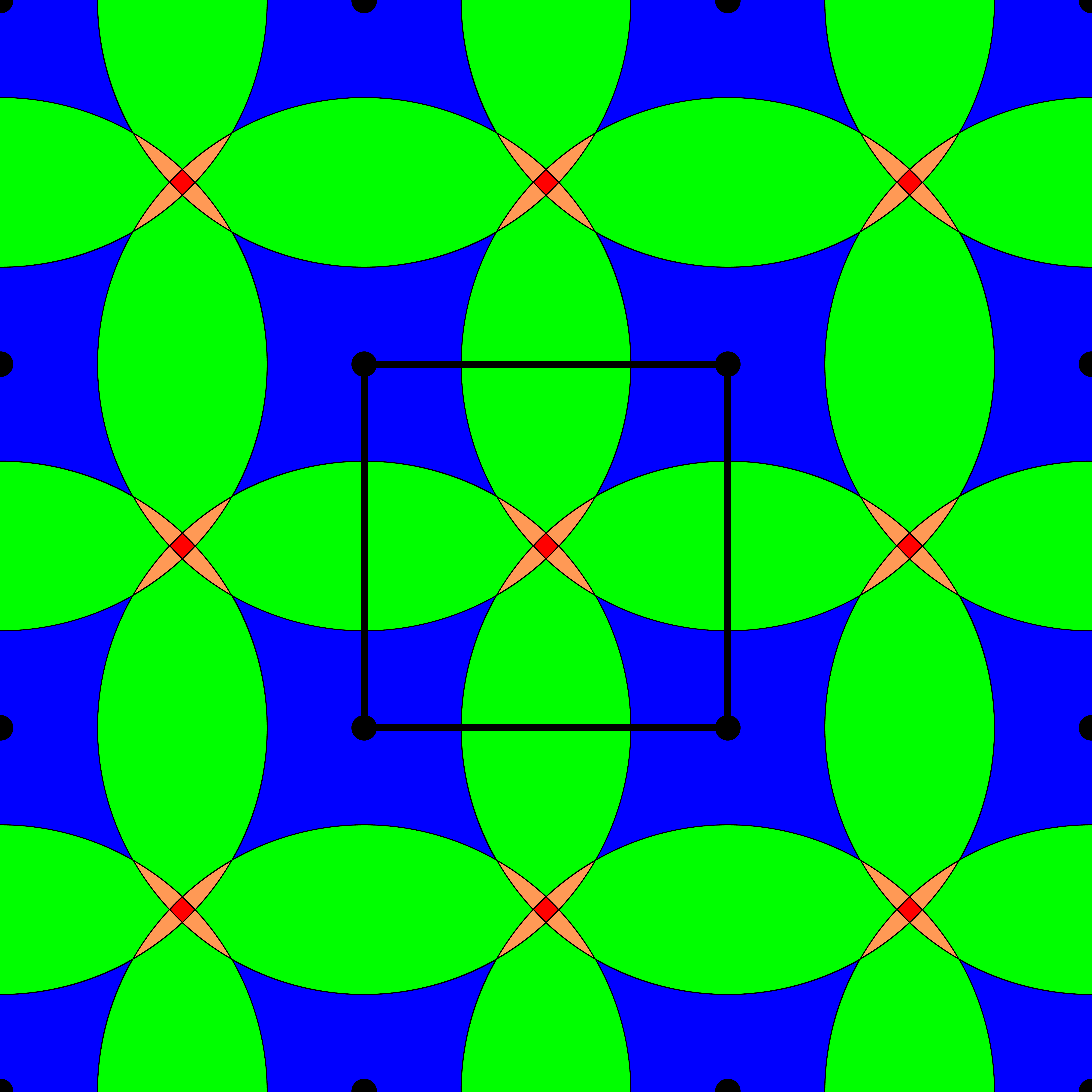}
  \hspace*{1mm}
  \includegraphics[height=28mm]{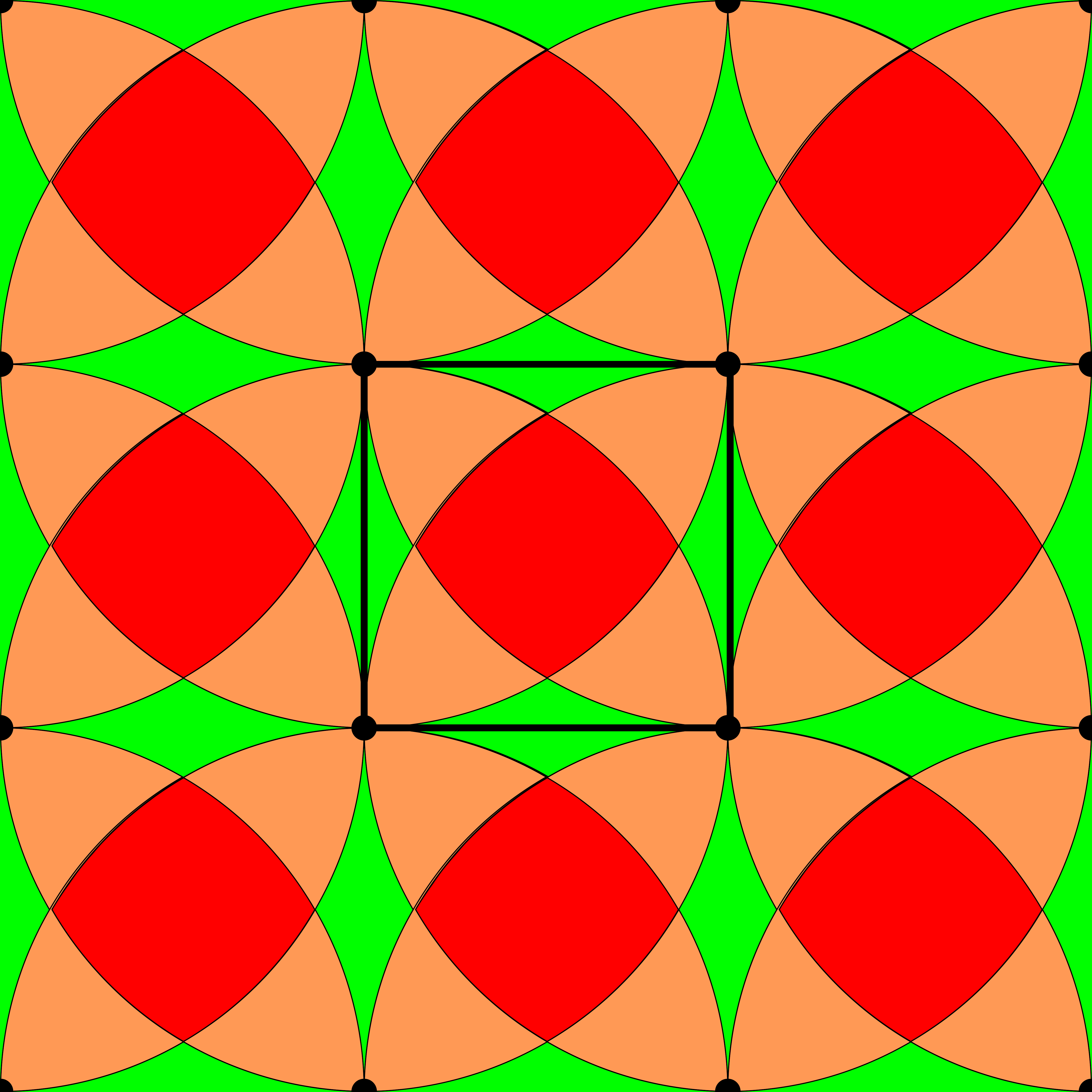}}
  \parbox{80mm}{
  \includegraphics[width=80mm]{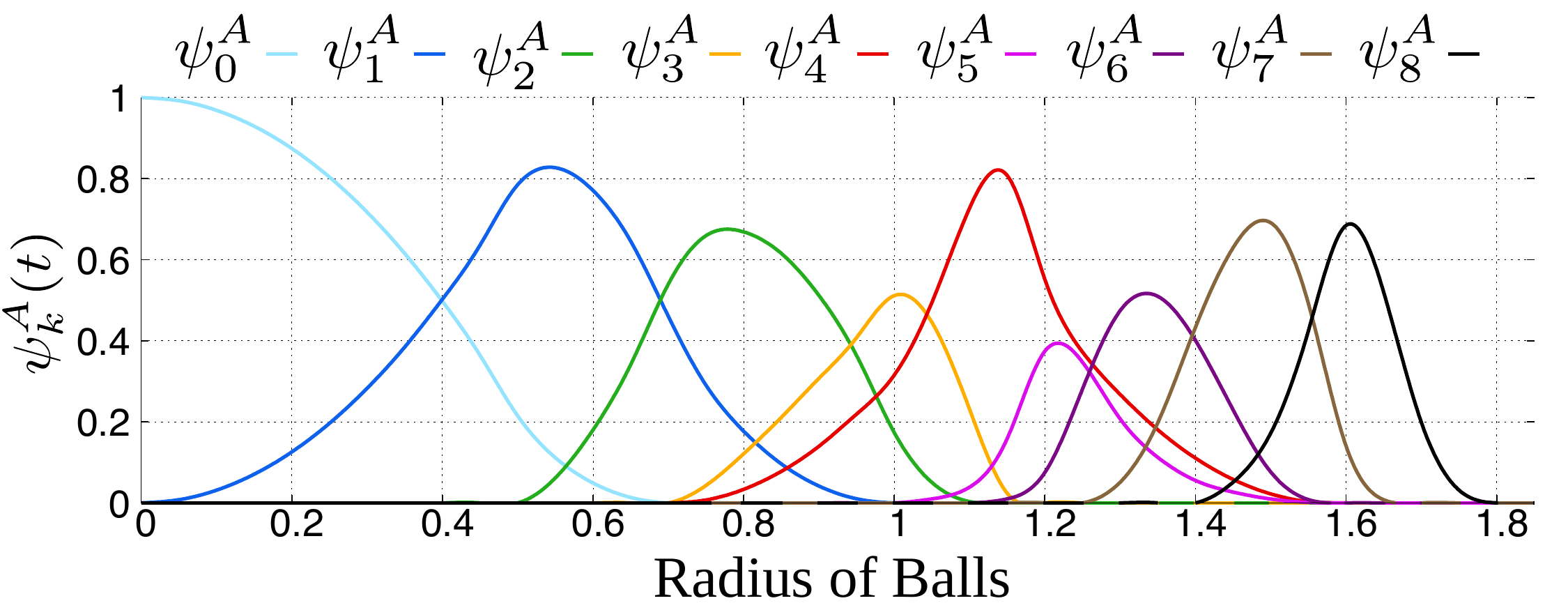}
  \includegraphics[width=80mm]{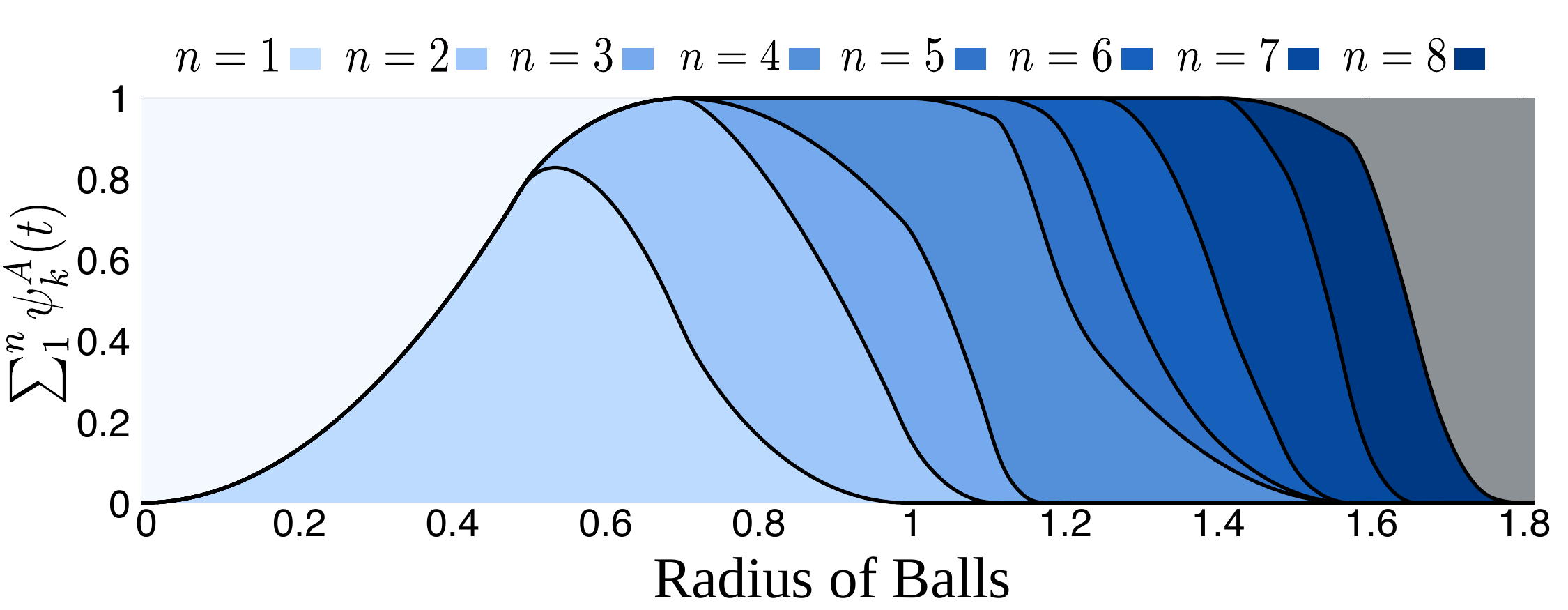}}
  \caption{The density fingerprint map of the hexagonal lattice on the \emph{top} and, for comparison, of the square lattice on the \emph{bottom}.
  \emph{Left}: the $k$-fold covers of the two sets for four different radii each: $t=0.25, 0.55, 0.75, 1.00$.
  \emph{Right}: the graphs of the respective first nine density functions above the corresponding \emph{densigram}, in which the zeroth function can be seen upside-down and the remaining density functions are accumulated from left to right.}
  \label{fig:hexagonalsquare}
\end{figure}

\section{Continuity}
\label{sec:Stability}

We prove that the density fingerprint map is Lipschitz continuous with respect to small perturbations of the points. 
To formalize this result, we introduce distances between periodic sets and between density fingerprints.
For sets $A, Q \subseteq \Rspace^3$ of equal cardinality, the \emph{(Euclidean) bottleneck distance} is the infimum, over all bijections, $\gamma \colon A \to Q$, of the supremum Euclidean distance between matched points, and for density fingerprints, $\Fingerprint{} (A), \Fingerprint{} (Q)$, we use the supremum of the weighted infinity norms of the differences between corresponding density functions:
\begin{align}
  \dBottleneck{A}{Q}  &=  \inf_{\gamma \colon A \to Q} \sup_{a \in A} \Edist{a}{\gamma(a)}_2 , \\
  \dFingerprint{\Fingerprint{}(A)}{\Fingerprint{}(Q)}  &=  \sup_{k \geq 0} \tfrac{1}{\sqrt[3]{k+1}^2}
    \infdist{\probB{k}{A}}{\probB{k}{Q}} .
\end{align}
Note the damping of the difference between corresponding density functions.
The reason for it is technical and related to the fact that density functions with higher $k$ tend to vanish at later values of $t$. 
As a consequence, the sensitivity of the density function to any perturbation increases with growing $k$, and the damping compensates for this tendency.
Before proving Lipschitz continuity,
we show that two periodic sets with small bottleneck distance between them necessarily have a common lattice.
\begin{lemma}[Common Lattice]
  \label{lem:common_lattice}
  Let $A, Q$ be periodic sets in $\Rspace^3$, and let $r_Q > 0$ be the packing radius of $Q$.
  If $\dBottleneck{A}{Q} < r_Q$, then there is a lattice $\Lambda$ with unit cell $U$ in $\Rspace^3$ such that $\card{(A \cap U)} = \card{(Q \cap U)}$ and $A = (A \cap U) + \Lambda$ and $Q = (Q \cap U) + \Lambda$.
\end{lemma}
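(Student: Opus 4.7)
My plan is to take $\Lambda := \Lambda_A \cap \Lambda_Q$, where $\Lambda_A$ and $\Lambda_Q$ denote the full translational symmetry lattices of $A$ and $Q$. First, I would extract from $\dBottleneck{A}{Q} < r_Q$ a bijection $\gamma \colon A \to Q$ with $D := \sup_{a \in A} \Edist{a}{\gamma(a)} < r_Q$, and set $\delta := r_Q - D > 0$. The key technical input I would use throughout is a uniqueness lemma: for each $a \in A$, $\gamma(a)$ is the only point of $Q$ within open distance $r_Q$ of $a$, because any other $q \in Q$ would force $\Edist{q}{\gamma(a)} < 2r_Q$, contradicting the packing bound $2 r_Q$ between distinct points of $Q$. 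The same uniqueness applies verbatim inside every translate $Q + u$, which has the same packing radius.

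The main work will be showing $\Lambda$ is full, equivalently that $\Lambda_A$ and $\Lambda_Q$ are commensurate. I would argue this by contradiction. Suppose $u = v + w \in \Lambda_A + \Lambda_Q$ with $v \in \Lambda_A$, $w \in \Lambda_Q$, and $0 < \norm{u} < \delta$. Since $w \in \Lambda_Q$, $Q + u = Q + v$, and I can exhibit two bijections $A \to Q + u$ with displacement strictly less than $r_Q$: the direct shift $a \mapsto \gamma(a) + u$ (displacement at most $D + \norm{u} < r_Q$) and the transport $a \mapsto \gamma(a - v) + v$ (using $A - v = A$, displacement $\Edist{a-v}{\gamma(a-v)} \leq D$). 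By uniqueness inside the Delone set $Q + u$, the two bijections coincide, which rearranges to $\gamma(b + v) = \gamma(b) - w$ for every $b \in A$. Iterating, $\gamma(b + kv) = \gamma(b) - kw$, so the displacement function $d(a) := \gamma(a) - a$ satisfies $d(b + kv) = d(b) - ku$. Combined with $\norm{d} \leq D$ everywhere, this forces $|k| \cdot \norm{u} \leq 2D$ for every integer $k$, contradicting $\norm{u} > 0$.

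Consequently $\Lambda_A + \Lambda_Q$ has no nonzero element of norm below $\delta$, so it is discrete in $\Rspace^3$ and, since it contains the full lattice $\Lambda_A$, itself a full lattice. Standard lattice theory (via $\Lambda_A / (\Lambda_A \cap \Lambda_Q) \cong (\Lambda_A + \Lambda_Q) / \Lambda_Q$) then makes $\Lambda$ a finite-index subgroup of both $\Lambda_A$ and $\Lambda_Q$, hence full. With any unit cell $U$ of $\Lambda$, the inclusions $\Lambda \subseteq \Lambda_A$ and $\Lambda \subseteq \Lambda_Q$ give $A = (A \cap U) + \Lambda$ and $Q = (Q \cap U) + \Lambda$ directly, and the motif sizes match because the bounded-displacement bijection $\gamma$ forces the point densities of $A$ and $Q$ to agree.

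The hard part will be the commensurability step. A first attempt, counting ``motif configurations'' of $Q + v$ sitting close to $A$ via the uniqueness lemma, collapses whenever $A$ projects densely to the torus $\Rspace^3 / \Lambda_Q$, since then the constraint ``close to a point of $A$'' becomes vacuous. Playing the two bijections to $Q + u$ against each other via uniqueness, and letting the resulting relation propagate along iterated shifts by $v$, is what converts the $r_Q$-packing hypothesis into an unbounded arithmetic progression inside a bounded displacement function, yielding the needed contradiction.
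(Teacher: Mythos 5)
Your proof is correct, but it takes a genuinely different route from the paper's. The paper also argues by contradiction from incommensurability, but it works with the intersection $\Lambda_A \cap \Lambda_Q$: it picks a basis vector $v$ of $\Lambda_A$ no nonzero multiple of which lies in $\Lambda_Q$, reduces the orbit $a + nv$ modulo $\Lambda_Q$ to infinitely many distinct representatives in the unit cell of $\Lambda_Q$, finds two within $\varepsilon = r_Q - \delta$ of each other by pigeonhole, and then argues geometrically that the resulting line of $\varepsilon$-spaced points must land in one of the gaps (of width at least $2\varepsilon$) between the balls of radius $\delta$ around the points of $Q$, contradicting the bottleneck bound. You instead work with the sum $\Lambda_A + \Lambda_Q$ and exploit the rigidity of the matching itself: the packing radius makes the nearby point of $Q+u$ unique, so a short vector $u = v + w$ forces the functional equation $\gamma(b + kv) = \gamma(b) - kw$, and the resulting linear drift contradicts the boundedness of the displacement. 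Your version buys a few things: the drift argument is fully rigorous where the paper's ``line of points falls into a gap'' step is somewhat informal; you obtain the quantitative statement that every nonzero vector of $\Lambda_A + \Lambda_Q$ has norm at least $r_Q - D$, which immediately gives discreteness and hence fullness of the common lattice; and you explicitly establish the remaining claims of the lemma (the decompositions $A = (A \cap U) + \Lambda$ and $Q = (Q \cap U) + \Lambda$, and the equality of motif sizes via the density argument), which the paper's proof leaves implicit. The paper's argument is shorter and more visual.
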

\begin{proof}
  Since $A, Q \subseteq \Rspace^3$ are periodic, there are lattices with unit cells such that $A = (A \cap U_A) + \Lambda_A$ and $Q = (Q \cap U_Q) + \Lambda_Q$.
  To get a contradiction, we assume that there is however no common lattice for $A$ and $Q$.
  Equivalently, $\Lambda_A \cap \Lambda_Q$ is a lattice of dimension at most $2$.
  Therefore there exists a basis vector, $v$, of $\Lambda_A$ such that $n v \in \Lambda_Q$ implies $n = 0$.
  Picking a point $a \in A$, we consider the infinitely many points $a(n) = a + nv$, with $n \in \Zspace$.
  For each $a(n)$, let $q(n)$ be the point in $\Lambda_Q$ such that $a(n) \in q(n) + U_Q$, and define $b(n) = a(n) - q(n)$, which we note belongs to $U_Q$.
  
  \smallskip
  There are infinitely many pairwise different points $b(n)$ in the unit cell, and it suffices to prove that at least one is at distance larger than $\dd = \dBottleneck{A}{Q}$ from all points in $Q$.
  To see this, let $b(i)$ and $b(j)$ be at distance less than $\ee = r_Q - \dd$ from each other, and note that $b(i+n[j-i]) = b(i) + n[b(j)-b(i)]$, for $n \in \Zspace$, provided the point on the right-hand side of the equation belongs to $U_Q$.
  In other words, we have an entire line of points with distance less than $\ee$ between contiguous points.
  The gap between balls of radius $\dd$ centered at the points in $Q$ is at least $2 \ee$, which implies that at least one of the points on the line is outside all such balls.
  This contradicts the assumption that the bottleneck distance between $A$ and $Q$ is $\dd = r_Q - \ee$.
  The existence of a common lattice of $A$ and $Q$ follows.
\end{proof}

The proof of Lipschitz continuity makes use of the common lattice of the sets before and after the perturbation.
We therefore formulate the claim assuming that the bottleneck distance between the two sets is less than the packing radii.
\begin{theorem}[Fingerprint Continuity]
  \label{thm:fingerprint_stability}
  Let $A,Q$ be periodic sets in $\Rspace^3$,
  both with packing radius at least $r > 0$ and with covering radius at most $R < \infty$.
  If $\dd = \dBottleneck{A}{Q}  < r$, then
  there exists a constant $C = C(r, R)$ such that
  $\dFingerprint{\Fingerprint{}(A)}{\Fingerprint{}(Q)}  \leq  C \cdot \dBottleneck{A}{Q}$.
\end{theorem}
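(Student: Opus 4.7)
The plan is to use the Common Lattice Lemma (Lemma~\ref{lem:common_lattice}) to put $A$ and $Q$ on a shared lattice $\Lambda$ with cell $U$, reduce the two-set comparison $\bigcup^k A(t)$ vs.\ $\bigcup^k Q(t)$ to a one-set comparison $\bigcup^k A(t)$ vs.\ $\bigcup^k A(t-\dd)$ via the bottleneck matching, and bound the resulting increment by a shell-volume estimate in $U$. The covering radius will then truncate the analysis in $t$ so as to balance the damping factor $1/\sqrt[3]{k+1}^2$ in $\dFingerprint{\cdot}{\cdot}$.

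Since $\dd < r \leq r_Q$, Lemma~\ref{lem:common_lattice} produces a common lattice $\Lambda$, and the strict inequality $\dd < r$ also forces the bottleneck matching to be unique: each $a \in A$ has a single nearest neighbour $\gamma(a) \in Q$ at distance at most $\dd$, and the bijection $\gamma \colon A \to Q$ commutes with the $\Lambda$-action. The key inclusion, valid for every $t \geq \dd$ and every $k \geq 1$,
\[
  \textstyle\bigcup^k A(t) \setminus \bigcup^k Q(t) \;\subseteq\; \bigcup^k A(t) \setminus \bigcup^k A(t-\dd),
\]
follows from the observation that $\Edist{x}{a} \leq t - \dd$ implies $\Edist{x}{\gamma(a)} \leq t$, so that coverage of $x$ by $A(t-\dd)$ is dominated by coverage of $x$ by $Q(t)$. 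Adding the symmetric inclusion and taking volumes in $U$ yields
\[
  \abs{\probA{k}{A}(t) - \probA{k}{Q}(t)} \leq [\probA{k}{A}(t) - \probA{k}{A}(t-\dd)] + [\probA{k}{Q}(t) - \probA{k}{Q}(t-\dd)].
\]

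Each increment equals the density in $U$ of the $\Lambda$-periodic shell union $\bigcup_{a \in A}[\ball{t}{a} \setminus \ball{t-\dd}{a}]$. Using $\Lambda$-periodicity to collapse the sum to one representative per orbit, the packing bound $|A \cap U|/\volume{U} \leq 3/(4\pi r^3)$, and the single-shell volume at most $4\pi \dd (t+\dd)^2$, this density is at most $C_1(r)\,(t+\dd)^2\,\dd$. A volume/pigeonhole argument against the covering radius further shows that $\probA{k}{A}(t) = \probA{k}{Q}(t) = 1$ for all $t \geq R(1+\sqrt[3]{k})$, so the difference vanishes beyond that radius. Combining,
\[
  \|\probA{k}{A} - \probA{k}{Q}\|_\infty \;\leq\; C_2(r,R)\,(k+1)^{2/3}\,\dd,
\]
and via $\probB{k}{A} = \probA{k}{A} - \probA{k+1}{A}$ the same shape of bound holds for $\probB{k}{A} - \probB{k}{Q}$. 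The factor $(k+1)^{2/3}$ is then cancelled exactly by the damping $1/\sqrt[3]{k+1}^2$ in $\dFingerprint{\cdot}{\cdot}$, yielding $\dFingerprint{\Fingerprint{}(A)}{\Fingerprint{}(Q)} \leq C(r,R)\,\dd$.

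The main obstacle is the density estimate for the shell union at large $t$: shells of distinct points of $A$ overlap heavily, so naïvely summing entire shell volumes over all $a \in A$ whose shell meets $U$ is too wasteful. One must use $\Lambda$-periodicity to collapse the summation to the $|A \cap U|$ orbit representatives and then apply the packing bound to the density of $A$. This is what yields the sharp $O((t+\dd)^2\,\dd)$ scaling, which precisely matches the cubic-root support growth of $\probB{k}{A}$ and produces a Lipschitz constant independent of $k$.
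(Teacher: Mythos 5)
Your proposal is correct, and its central estimate takes a genuinely different (though morally parallel) route from the paper's. The paper perturbs one motif point at a time, $a \mapsto \gamma(a)$, observes that the multiplicity of coverage can only change on the symmetric difference $\ball{t}{a} \ominus \ball{t}{\gamma(a)}$, bounds that by an annulus of width $\dd$, and telescopes over the $\card{M}$ motif points to get $\abs{\probB{k}{A}(t)-\probB{k}{Q}(t)} \leq \rho\,\tfrac{4\pi}{3}[3\dd t^2 + \tfrac14 \dd^3]$ directly for the exact-$k$ functions. You instead exploit the sandwich $\bigcup^k A(t-\dd) \subseteq \bigcup^k Q(t) \subseteq \bigcup^k A(t+\dd)$ (only bijectivity of $\gamma$ is needed here, so your extra claims about uniqueness and $\Lambda$-equivariance of the matching are true but superfluous), reduce to the monotone increments of $\probA{k}{A}$ in $t$, and bound those by the union of annuli of width $\dd$ --- the same $O(\rho\, t^2 \dd)$ per unit cell. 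Your interleaving-in-$t$ formulation is arguably cleaner (it avoids tracking how multiplicities change as a single center moves, and is reminiscent of stability arguments for persistence), at the cost of passing from $\probA{k}{}$ to $\probB{k}{} = \probA{k}{}-\probA{k+1}{}$ by the triangle inequality, which loses a harmless factor of $2$; the paper's route yields the explicit constant $C = 13R^2/r^3$. The truncation step is identical in both: $\probA{k+1}{}(t)=1$ for $t \geq R(1+\sqrt[3]{k+1})$ by the covering-radius pigeonhole, so the supremum over $t$ is attained at $t = O(R\sqrt[3]{k+1})$, and the resulting $(k+1)^{2/3}$ growth is exactly cancelled by the damping. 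Two cosmetic points: your phrase ``each increment \emph{equals} the density of the shell union'' should read ``is at most'' (a point of a shell need not witness a drop in multiplicity), and, like the paper, you implicitly assume the infimum defining $\dBottleneck{A}{Q}$ is attained; if not, run the argument with $\dd+\ee$ and let $\ee \to 0$.
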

\begin{proof}
  By Lemma \ref{lem:common_lattice}, there is a lattice, $\Lambda \subseteq \Rspace^3$, that is common to both sets, $A$ and $Q$, and we write $U$ for the corresponding unit cell.
  Let $\gamma \colon A \to Q$ be a bijection such that $\dBottleneck{A}{Q}$ is the supremum Euclidean distance between corresponding points, let $k$ be a non-negative integer, and let $t$ be a positive real number.
  We need an upper bound for
  \begin{align}
  	\abs{\probB{k}{A}(t) - \probB{k}{Q}(t)} &= \frac{   \abs{ \volume{\exfoldcov{A}{k}{t} \cap U} - \volume{\exfoldcov{Q}{k}{t} \cap U} }   }{\volume{U}},
  	\label{eqn:to_bound}
  \end{align}
  in which $\exfoldcov{A}{k}{t} = \bigcup^k A(t) \setminus \bigcup^{k+1} A(t)$ consists of all points $x \in \Rspace^3$ contained in exactly $k$ balls of $A(t)$, and similarly for $\exfoldcov{Q}{k}{t}$.
  As a first step, we find an upper bound on the numerator, $\Delta$, for the case in which $\gamma$ is the identity except for one point, $a \in M$, which it maps to $q = \gamma(a) \in \ball{\dd}{a}$; that is: $Q=A \setminus (a+\Lambda) \cup (q+\Lambda)$.
  A point $x \in \Rspace^3$ is possibly covered by a different number of balls before and after the perturbation only if $x \in [\ball{t}{a} \ominus \ball{t}{q}] +\Lambda$, with $\ominus$ denoting the symmetric difference.
  Observe that this set is contained in $[ \ball{t+\tfrac{\dd}{2}}{\tfrac{a+q}{2}} \setminus \ball{t-\tfrac{\dd}{2}}{\tfrac{a+q}{2}} ] +\Lambda$.
  Hence,
  \begin{align}
    \Delta &\leq \volume{ \ball{t}{a} \ominus \ball{t}{q}} \leq \frac{4 \pi}{3} \left[ \left( t+\tfrac{\dd}{2} \right)^3 - \left( t-\tfrac{\dd}{2} \right)^3\right]= \frac{4 \pi}{3} \left[ 3 \dd t^2 + \tfrac{1}{4} \dd^3 \right].
    \label{eqn:error_per_perturbation}
  \end{align}
  Perturbing one point of $M$ after the other, we can bound the error by \eqref{eqn:error_per_perturbation} each time.
  Using the intensity,
  $\rho = \card{M} / \volume{U}$, this implies
  \begin{align}
    \label{eqn:bound_for_k_t}
    \abs{\probB{k}{A}(t) - \probB{k}{Q}(t)} &\leq \rho \Delta \leq \rho \frac{4 \pi}{3} \left[ 3 \dd t^2 + \tfrac{1}{4} \dd^3 \right].
  \end{align}
  We can eliminate the dependence on $t$ by observing that for each $k$ there is a value of $t$ beyond which the $k$-th density functions of $A$ and $Q$ vanish.
  To determine this value, consider a point $y \in \Rspace^3$ and the sets $A \cap \ball{t}{y}$ and $Q \cap \ball{t}{y}$.
  By the definition of $R$, the balls of radius $R$ centered at the points of $A$ cover $\ball{t-R}{y}$, and similarly for $Q$.
  It follows that the two sets contain at least $(t/R - 1)^3$ points each. Setting $k+1 \leq (t/R - 1)^3$, we see that for $t \geq R \sqrt[3]{k+1} + R$, both sets have at least $k+1$ points each.
  Equivalently, $y$ is covered by at least $k+1$ balls of radius $t$.
  Since this holds for every point $y \in \Rspace^3$, we have $\probB{k}{A}(t) = \probB{k}{Q}(t) = 0$ for all $t \geq R \sqrt[3]{k+1} + R$.
  Note that $R \sqrt[3]{k+1} + R \leq 2 R \sqrt[3]{k+1}$ for all $k \geq 0$. Replacing $t$ in \eqref{eqn:bound_for_k_t} by the latter bound, we get
  \begin{align}
    \frac{1}{\sqrt[3]{k+1}^2} \norm{\probB{k}{A} - \probB{k}{Q}}_\infty \leq 16 \pi \rho R^2 \dd  + \frac{\pi}{3} \rho \dd^3
    \leq  \frac{12 R^2}{r^3} \dd + \frac{1}{4 r^3} \dd^3 ,
    \label{eqn:explicit_lipschitz}
  \end{align}
  in which we use $\rho \frac{4 \pi}{3} r^3 \leq 1$ to get the final inequality.
  Using $\dd^2 < r^2 < R^2$, this gives $C = 13 R^2 / r^3$ as an upper bound for the Lipschitz constant.
\end{proof}

Figure \ref{fig:cloud2D_packing2} illustrates Theorem \ref{thm:fingerprint_stability} for a periodic set, $A$, and its perturbation, $Q$, in $\Rspace^2$ by showing the first eight (undamped) density functions for both sets in different colors.
\begin{figure}[!ht]
  \centering
  \parbox{45mm}{
  \vspace{-0.5em}
  \includegraphics[width=40mm]{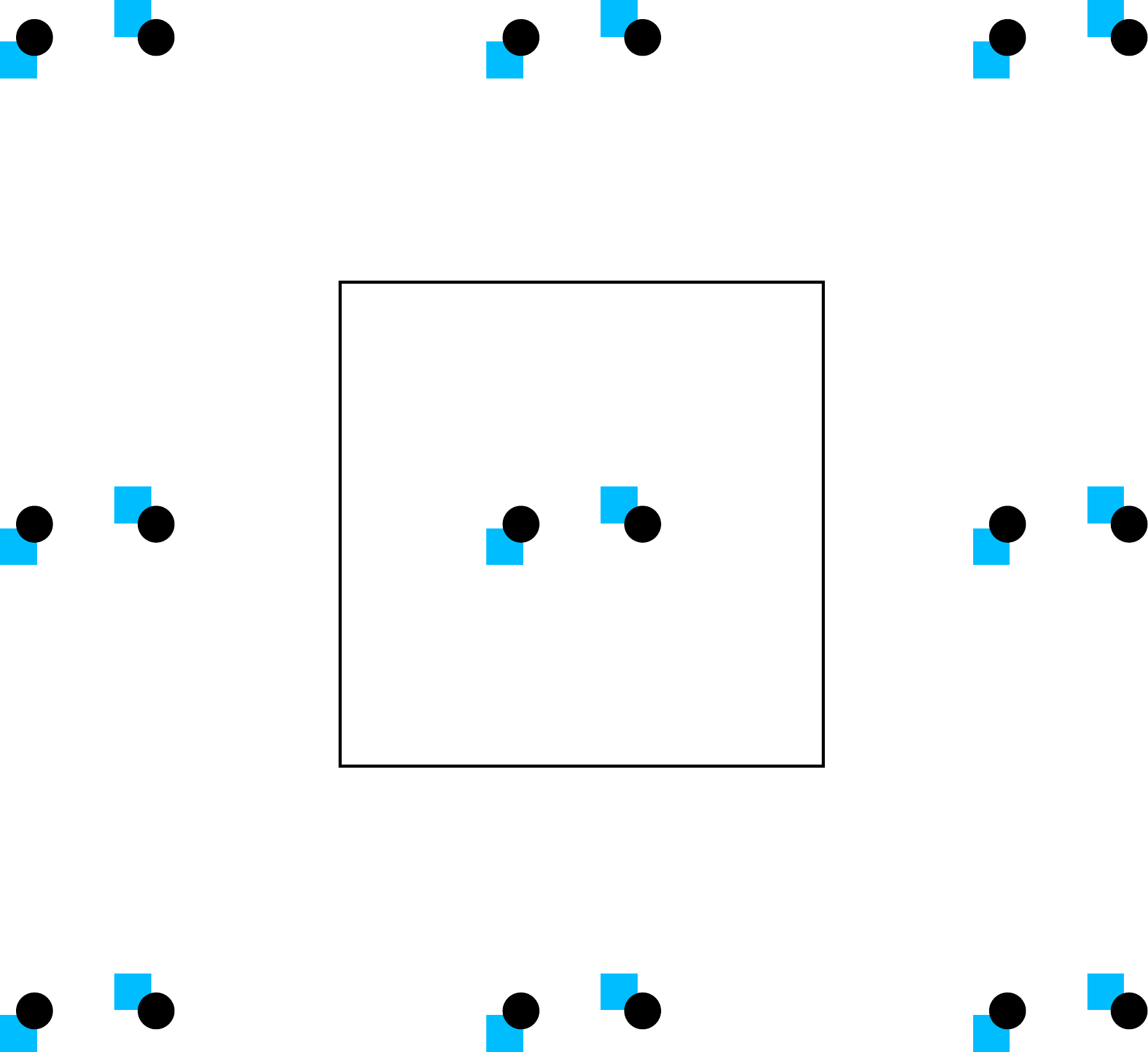}}
  \hspace{2mm}
  \parbox{80mm}{
  \includegraphics[width=80mm]{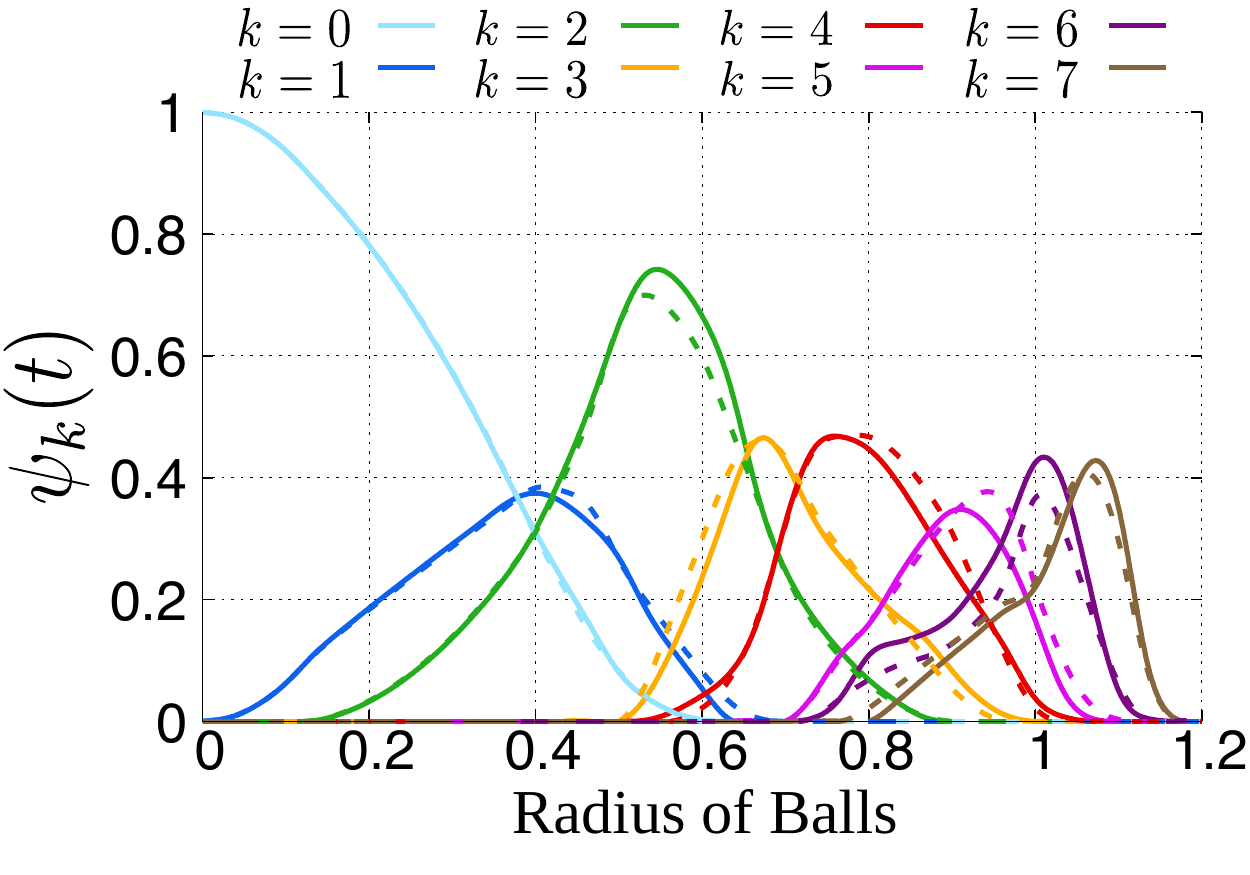}}
  \caption{
    \emph{Left}: a periodic set with two \emph{black} points in its square unit cell, and the perturbed periodic set with two \emph{blue} points in the same unit cell. 
    \emph{Right}: the graphs of the density functions are \emph{solid} for the original set and \emph{dashed} for the perturbed set.
    As predicted by Theorem~\ref{thm:fingerprint_stability}, the small perturbation of the periodic set causes a small change in the fingerprint.}
\label{fig:cloud2D_packing2}
\end{figure}

\section{Completeness}  
\label{sec:Completeness}

The fingerprint map is \emph{complete} if it is injective up to isometries; that is: non-isometric periodic sets are mapped to different fingerprints.
We prove completeness generically, i.e.\ on a dense open subset; compare to \cite{BoKe04,Sen08,LSS03}.
The density fingerprint also distinguishes non-generic sets for which other means fail, as will be illustrated by an example in Section \ref{sec:hope_removing_genericity}.
The completeness of the fingerprint for all periodic sets, however, remains an open question.
Indeed, at the time of writing this paper, the authors
are not aware of a $3$-dimensional counterexample to completeness, but there is a $1$-dimensional counterexample due to Morteza Saghafian:
letting $U=\{0,4,9\}$ and $V=\{ 0,1,3\}$, it can be checked 
that the finite sets $U+V$ and $U-V$, and the periodic sets $15 \Zspace + (U+V)$ and $15 \Zspace + (U-V)$ cannot be distinguished by the $1$-dimensional density fingerprint map.

\subsection{Generic Completeness}
\label{sec:generic_completeness_alternative}

We prove the completeness of the density fingerprint map for generic periodic sets in $\Rspace^3$.
The notion of genericity is defined by conditions that are satisfied by an open and dense subset of the space of periodic point sets. 
We formulate such conditions in terms of the \emph{circumradius} of edges, triangles, and tetrahedra, 
which is the radius of the smallest sphere that passes through the vertices of the simplex. 
To avoid infinitely many constraints, we introduce an upper bound on the circumradii to consider. Denoting by $L=L(A,\vartheta)$ the list of all edges (pairs), triangles (triples) and tetrahedra (quadruples) spanned by points of a periodic set $A$ whose circumradius is at most $\vartheta$, we 
call $A$ \emph{generic for a constant threshold, $\vartheta$}, if---apart from necessary violations due to periodicity---it satisfies the following three conditions:
\medskip \begin{enumerate}[ii]
  \item[I.] the circumradii of different simplices in $L$ are different;
  \item[II.] the circumradii of different edges in $L$ are not related to each other by a factor of $2$.
  \item[III.] for every $t \leq \vartheta$, there is at most one set of six circumradii of simplices in $L$ such that the edges with twice their lengths assemble to a tetrahedron whose circumradius is $t$.
\end{enumerate} \medskip
We call an edge a \emph{lattice edge} if its length is 
the distance between two lattice points.
Lattice edges violate Condition II and can thus be identified as such.
A \emph{lattice triangle} has three lattice edges, and a \emph{lattice tetrahedron} has six lattice edges. 
The important difference between lattice and non-lattice simplices is that only the latter are unique up to lattice translations.

\medskip
Since Conditions I, II, III can be phrased via finitely many algebraic equations in the vectors $x\in M, v_1,v_2,v_3$, the set of generic periodic sets with threshold $\vartheta$ is is open and dense in the space of all periodic sets with at most $m$ motif points.
We write $\Radius{A}$ for the largest finite circumradius of $p \leq 4$ points in $A$ with pairwise distance at most four times the diameter of the unit cell.
Since the diameter is the distance between two lattice points, this implies that $\Radius{A}$ is at least double the diameter.
\begin{theorem}[Generic Completeness]
  \label{thm:generic_completeness_alternative}
  Let $A,Q \subseteq \Rspace^3$ be non-isometric periodic sets that are 
  generic for the threshold $\vartheta = \max\{\Radius{A},\Radius{Q}\}$. Then $\Fingerprint{}(A) \neq \Fingerprint{}(Q)$. 
\end{theorem}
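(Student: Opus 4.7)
The plan is to show that $\Fingerprint{}(A)$ determines the isometry class of $A$ in three stages: first extract from the analytic structure of the density functions the multiset of circumradii of edges, triangles, and tetrahedra of $A$ up to the threshold $\vartheta$, together with each simplex's type; then reconstruct the lattice $\Lambda$; and finally reconstruct the motif.

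For the first stage, I interpret $\probA{k}{A}(t)$ via the order-$k$ Voronoi diagram: on each order-$k$ cell the contribution to $\probA{k}{A}(t)$ is real-analytic in $t$, and the non-analytic breakpoints occur precisely at radii where the combinatorics of that diagram changes, that is, at circumradii of $k$-tuples of points of $A$. An edge circumradius consequently appears as a kink of $\probA{2}{A}$, hence of $\probB{1}{A}$ and $\probB{2}{A}$ only; a triangle circumradius appears in $\probA{3}{A}$, hence in $\probB{2}{A}$ and $\probB{3}{A}$ only; and a tetrahedron circumradius appears in $\probA{4}{A}$, hence in $\probB{3}{A}$ and $\probB{4}{A}$ only. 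By Condition I the circumradii of inequivalent (under lattice translation) simplex classes in $L$ are distinct, so for every breakpoint $t \leq \vartheta$ one reads off which density functions have a kink there, classifies $t$ as an edge, triangle, or tetrahedron circumradius, and extracts its multiplicity from the magnitude of the jump in the one-sided derivatives of the affected $\probB{k}{A}$.

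Armed with this labelled multiset of circumradii, the second stage singles out the lattice edges: any lattice vector $v$ produces an edge of circumradius $\tfrac{1}{2}\norm{v}$, and $2v$ an edge of circumradius $\norm{v}$, so lattice edges inevitably come in pairs of circumradii $r, 2r$. By Condition II no other edges have this property, so lattice edges are exactly those whose circumradius participates in such a doubled pair. Because $\vartheta \geq \Radius{A} \geq 2\Diameter{U}$, enough short lattice vectors (including basis vectors of $\Lambda$) are captured to reconstruct $\Lambda$ up to orthogonal transformation. In the third stage one fixes a reference motif point $a_1$; the non-lattice edge circumradii record the distances from $a_1$ to the other motif points and their nearby lattice translates, the triangle circumradii pin down the triangles spanned by $a_1$ together with pairs of such points, and by Condition III the six known edge lengths of any tetrahedron containing $a_1$ assemble in a unique way into a tetrahedron of the observed circumradius. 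This uniquely places each motif point in a coordinate frame anchored at $a_1$ up to isometry; applying the same procedure to $Q$ must yield a congruent point set, contradicting the non-isometry hypothesis.

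The main obstacle will be the first stage: rigorously proving that the breakpoints of $\probB{k}{A}$ are exactly the circumradii of $(k+1)$- and $k$-tuples, with no accidental analytic cancellations between the contributions of different order-$k$ Voronoi cells. The analytic bookkeeping is delicate because each order-$k$ cell contributes a volume integral whose integrand depends on several changing ball intersections, and a single circumradius event must be tracked across all cells it affects. Condition I rules out the simplest accidental coincidences between circumradii, while Condition III is tailored to prevent the more intricate ambiguity where several edges could reassemble into distinct tetrahedra of the same circumradius; verifying that these conditions jointly suffice to extract the combinatorial and metric data hidden in the density functions is where most of the technical work will lie.
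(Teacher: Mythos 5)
Your overall architecture (read circumradii off the non-analyticities of the density functions, identify lattice edges via the factor-$2$ doubling of Condition II, assemble simplices via Condition III) matches the paper's, but two steps in your proposal do not work as stated. First, your classification of breakpoints by type is based on a false premise: the index $k$ at which the circumradius of a simplex produces a kink in $\probB{k}{A}$ is \emph{not} determined by whether the simplex is an edge, triangle, or tetrahedron, but by the number of further points of $A$ lying inside its circumsphere (its ``order''). For instance, when the two balls of an edge $ab$ first meet at the midpoint, that midpoint may already be covered by $j$ other balls, so the event perturbs $\probB{j}{A},\probB{j+1}{A},\probB{j+2}{A}$ rather than $\probB{1}{A},\probB{2}{A}$. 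So you cannot read the simplex type off the index of the affected density function, and your labelled multiset of ``edge'', ``triangle'', ``tetrahedron'' circumradii is not available. This is exactly the difficulty the paper sidesteps: it extracts only the unlabelled set of circumradii, treats every one of them as a candidate half-edge-length, and uses Condition III to certify that six of these candidate lengths form a genuine tetrahedron of $A$; no typing of breakpoints is ever needed.

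Second, and more seriously, your final stage asserts that the known edge lengths, triangles, and tetrahedra ``uniquely place each motif point in a coordinate frame anchored at $a_1$'' --- but a list of simplices given only up to isometry does not come labelled by which points of $A$ they involve, and even once a tetrahedron is placed, gluing the next tetrahedron onto a shared face admits two mirror-image positions for its fourth vertex. Resolving this chirality ambiguity at every gluing step is the substance of the reconstruction, and your proposal does not address it. (Note also that knowing all pairwise distances alone cannot suffice: the paper's own Section 5.2 exhibits homometric sets.) The paper handles this by fixing a lexicographically shortest non-lattice tetrahedron $abcd$, gluing candidate tetrahedra $abce$ along the face $abc$, and using Condition I --- which makes non-lattice triangles unique in $A$ up to lattice translation --- to show that only one of the two tip positions $e_1,e_2$ yields a tetrahedron $abde_i$ present in the extracted list. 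Without an argument of this kind, your claim of a unique assembly is unsupported, and the proof is incomplete.
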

\begin{proof}
  Let $[A]$ denote the isometry class of $A$.
  We prove the unique reconstruction of $[A]$ from $\Fingerprint{}(A)$ in two steps: 
  \begin{align*}
    \Fingerprint{}(A) \hspace{0.3em} \xrightarrow{\textrm{\sc Step } 1} \hspace{0.3em} \textrm{tetrahedra in $L(A,\vartheta)$, up to isometries} \hspace{0.3em} \xrightarrow{\textrm{\sc Step } 2} \hspace{0.3em} [A]
  \end{align*}
  \noindent
  {\sc Step 1:} Each density function is a weighted sum of the volumes of intersections of $2$, $3$, or $4$ balls around points of $A$; see \cite[Equation (5)]{EdIg18}.
  The volume formulas of such intersections are given in \cite{EdFu94}.
  It is cumbersome but not difficult to prove that they are piecewise analytic, and that the circumradii of edges, triangles and tetrahedra spanned by points of $A$ are the positions where the functions are not analytic. 
Therefore, the set of all positions up to $\vartheta$ where at least one density function is not analytic yields the set of circumradii of simplices of $L$.
We avoid the technicalities of using the differences between the left- and right-derivatives to distinguish which of these are caused by $2$, $3$, or $4$ balls meeting, with the following trick. We treat all circumradii as if they were circumradii of edges, multiply them by two (to get the edge length), and try to assemble six of these edge lengths to form a tetrahedron.
Whenever this gives a circumradius of a simplex of $L$, we have found a tetrahedron of $A$ by Condition III. This way we can uniquely construct all tetrahedra of $L$ up to isometries.

\medskip \noindent
{\sc Step 2:} To start the process, we choose a non-lattice tetrahedron from the list.
If there is no such tetrahedron, then $A$ is a lattice and can be reconstructed from the lexicographically shortest lattice tetrahedron from the list---i.e.\ the tetrahedron consisting of the shortest lattice edge, the second-shortest lattice edge (linearly independent from the first), and so on---defining a (Minkowski-)reduced \cite{NgSt04} 
and therefore primitive unit cell of $A$.
On the other hand, if there exist non-lattice tetrahedra, we choose the lexicographically shortest one, $abcd$, with non-lattice edge $ab$. 

Placing $abcd$ in space---as we are only interested in the isometry class of $A$, we can place it arbitrarily---we identify all tetrahedra $abce$ from the list that have $abc$ as a face and try to glue them onto $abcd$. There are two possibilities (related by a reflection) of how to glue $abce$; we denote the two different tip positions by $e_1$ and $e_2$. We prove that at most one of the two options
gives a positive result when checking if the tetrahedron $abde_i$ is in the list of tetrahedra from Step $1$: The triangles $abd$ and $abe$ 
are non-lattice, and therefore unique in $A$ up to lattice translations by Condition I. 
Thus, when glued along $ab$, they span a uniquely defined tetrahedron $abde$ with a certain edge length $de$ that is the distance between $d$ and $e_i$ for at most one of its two possible positions.

\medskip
This gluing procedure yields (among others) all points of distance at most four times the diameter of the unit cell 
to $a,b,c,d$ (by definition of $\Radius{A}$), except the ones that lie on a plane spanned by the triangles 
$abc$ or $abd$. This neighborhood is large enough such that it contains every motif point at least once and such that it contains a lattice basis, which can be identified by computing the pairwise differences between the reconstructed points and checking whether they satisfy Condition II. Repeating the reconstructed points with respect to the lattice yields the isometry class of $A$. As the construction was unique given the genericity conditions, we get $\Fingerprint{}(A)\neq \Fingerprint{}(Q)$.
\end{proof}

\subsection{Distinguishing Non-Generic Periodic Sets}
\label{sec:hope_removing_genericity}

There are indications that the density fingerprint map distinguishes all periodic sets and not just the generic ones.
We now give the reason for our optimism. Example~\ref{eg:autocorrelation} describes two periodic sets that violate the above genericity conditions and can nevertheless be distinguished by the density fingerprint map.
On the other hand, the two sets can neither be distinguished by their density nor by their X-ray diffraction patterns; two means commonly used in crystallography to determine the structure of a crystal.
X-ray diffraction patterns give all pairwise distance vectors of the periodic set, but they do not determine the isometry class of a periodic set \cite{PaSh30}: there exist \emph{homometric structures}, which are non-isometric periodic sets with the same $2$-point autocorrelation functions; that is: identical multisets of pairs, up to 
translation.
There even exist periodic sets with the same $2$- and $3$-point autocorrelation functions, as we now explain.

\begin{figure}[ht]
  \includegraphics[height=33mm]{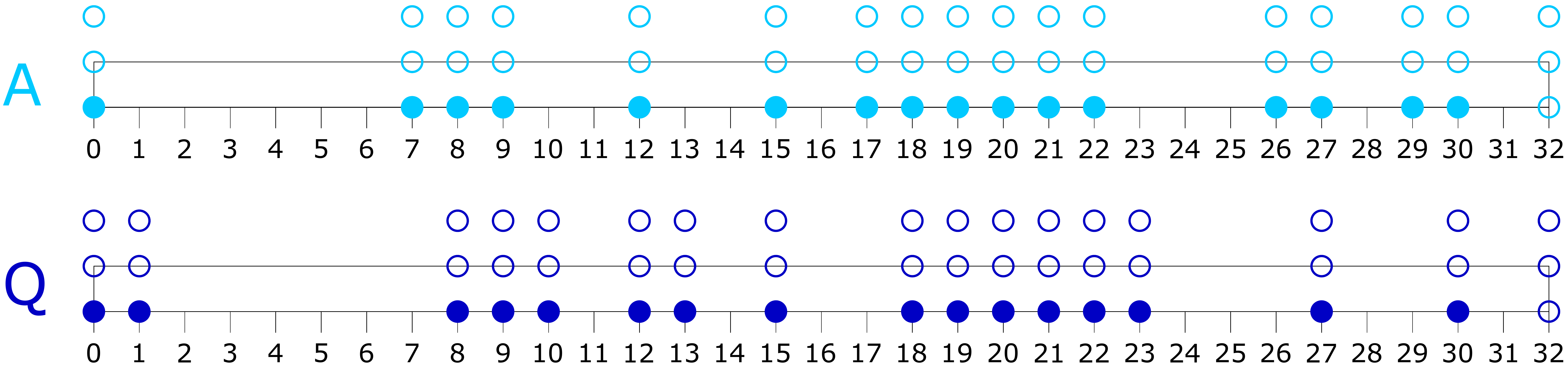}
  \caption{Periodic sets $A$ and $Q$ from Example~\ref{eg:deng_moody}, pictured with rectangular unit cells in two dimensions, for simplicity.
  Filled dots belong to the motifs while unfilled dots show the periodicity.}
  \label{fig:deng_moody}
\end{figure}

\begin{example}
\label{eg:autocorrelation}
  Let $A^{(1)}$
  and $Q^{(1)}$ be sets with periodicity $32$ in $\mathbb{R}$, each with $16$ points in the corresponding motif:
  \begin{align}
    0, 7, 8, 9, 12, 15, 17, 18, 19, 20, 21, 22, 26, 27, 29, 30; \\
    0, 1, 8, 9, 10, 12, 13, 15, 18, 19, 20, 21, 22, 23, 27, 30;
  \end{align}
  see Figure \ref{fig:deng_moody}.
  The authors of \cite[Section 5.3]{GrMo95} show that $A^{(1)}$ and $Q^{(1)}$ have the same $2$- and $3$-point autocorrelation functions.
  Taking the Cartesian product with $\Zspace^2$ preserves the equality between the autocorrelation functions, which yields periodic sets, $A,Q \subseteq \Rspace^3$, with matching $2$- and $3$-point autocorrelation functions.
  Nevertheless, our density fingerprint map distinguishes them, as shown in Table~\ref{tab:autocorrelation}: the $L_\infty$-distances between the first four corresponding density functions vanish but the next five $L_\infty$-distances
  are strictly positive.
  \label{eg:deng_moody}
\end{example}
    
\begin{table}[ht]
  \begin{center}
  \begin{tabular}{c||ccccccccc}
    $k$ & 0 & 1 & 2 & 3 & 4 & 5 & 6 & 7 & 8 \\ \hline \hline
    $\infdist{\probB{k}{A}}{\probB{k}{Q}}$ & 0.000 & 0.000 & 0.000 & 0.000 & 0.005 & 0.007 & 0.013 & 0.022 & 0.007
  \end{tabular}
  \end{center}
  \caption{$L_\infty$-distances between the corresponding density functions of the sets $A$ and $Q$ in Example~\ref{eg:autocorrelation}.}
  \label{tab:autocorrelation}
\end{table}

\section{Computation} 
\label{sec:Computation}

The algorithm for the density fingerprint map is based on two related geometric concepts: the $k$-th Dirichlet--Voronoi domain and the $k$-Brillouin zone of a point.
After introducing both, we explain how they are used, and how much time it takes to construct them.

\subsection{Dirichlet--Voronoi Domains and Brillouin Zones}
\label{sec:Dirichlet-Voronoi_Domains_and_Brillouin_Zones}

Let $A \subseteq \Rspace^3$ be a locally finite set of points.
For every positive integer $k$, the \emph{$k$-th Dirichlet--Voronoi domain} of a point $a \in A$ is the set of points in $\Rspace^3$ for which $a$ is among the $k$ closest points in $A$, and the \emph{$k$-th Brillouin zone} is the difference between the $k$-th and the $(k-1)$-st Dirichlet--Voronoi domains:
\begin{align}
  \domain{k}{a}{A}  &=  \{ x \in \Rspace^3 \mid \Edist{x}{b} < \Edist{x}{a} \mbox{\rm ~for at most~} k-1 \mbox{\rm ~points~} b \in A \} , \\
  \zone{k}{a}{A}  &=  \domain{k}{a}{A} \setminus \domain{k-1}{a}{A} ;
\end{align}
see Figure \ref{fig:brillouin}.
Here we set $\domain{0}{a}{A} = \emptyset$ so that the first Brillouin zone is well defined.
Note that $\zone{k}{a}{A}$ is the set of points $x \in \Rspace^3$ for which there are exactly $k-1$ points $b \in A$ that are closer to $x$ than $a$ is.
Observe also that $\domain{k}{a}{A}$ is closed and star-convex, and if $A$ is Delone, then it is also compact.
If $A$ is a lattice, $A = \Lambda$, then all $k$-th Dirichlet--Voronoi domains are translates of each other, and similarly for the Brillouin zones: $\domain{k}{a}{\Lambda} = \domain{k}{0}{\Lambda} + a$ and $\zone{k}{a}{\Lambda} = \zone{k}{0}{\Lambda} + a$.
Except for a measure zero subset of $\Rspace^3$, every point $x$ has a unique $k$-closest point in $\Lambda$.
This implies that the $k$-th Brillouin zones \emph{tile} $\Rspace^3$, by which we mean that their closures cover $\Rspace^3$ while their interiors are pairwise disjoint.
These properties generalize to a periodic set, $A = M + \Lambda$:  the $k$-th Brillouin zones of the points in $a + \Lambda$ are translates of each other, and the $k$-th Brillouin zones of all $a \in A$ tile $\Rspace^3$.
\begin{figure}
  \centering
  \includegraphics{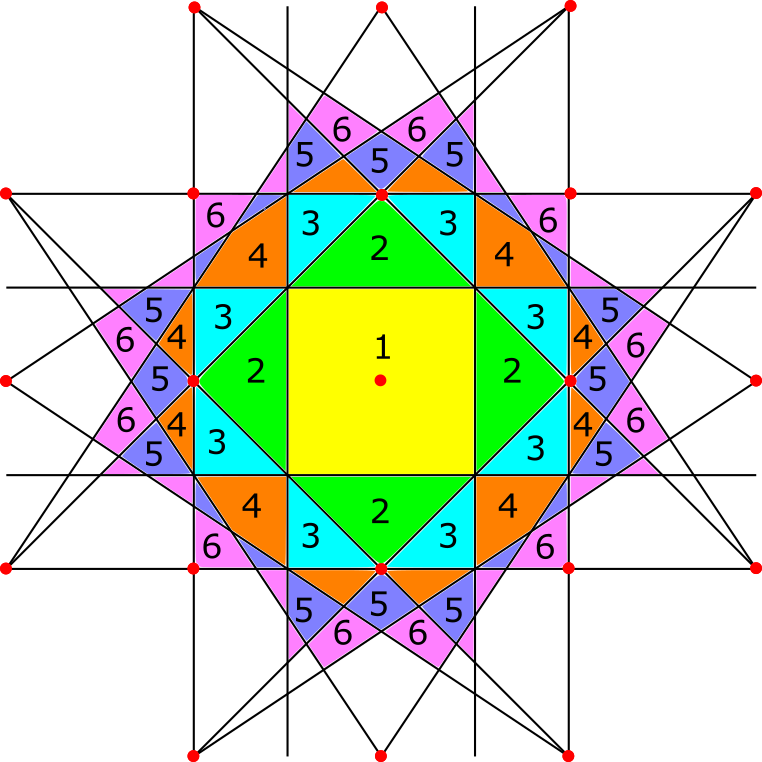}
  \caption{
    The $6$-th Dirichlet--Voronoi domain of the point in the center decomposed into the first six Brillouin zones, which are indicated by colors and labels.
  }
  \label{fig:brillouin}
\end{figure}

\subsection{Decomposed Multiple Cover}
\label{sec:Decomposed_k-fold_Cover}

Assume from here on that $A = M + \Lambda$ is a periodic set.
To compute $\probA{k}{A}$, we may use any fundamental domain of the lattice.
Particularly convenient is the union of the $k$-th Brillouin zones of the points in $M$ as it lends itself to finding the subset covered by at least $k$ of the balls.
\begin{theorem}[Density for Periodic Set]
  \label{thm:density_for_periodic_set}
  Let $A = M + \Lambda$ be periodic with lattice $\Lambda \subseteq \Rspace^3$
  and motif $M \subseteq U$ in the unit cell of $\Lambda$, and let $k \geq 1$ be an integer.
  Then the probability that a random point $x \in U$ belongs to
  $k$ or more balls of radius $t \geq 0$ centered at the points of $A$ is
  \begin{align}
    \probA{k}{A}(t)  &=  \tfrac{1}{\volume{U}} \sum_{a \in M} \volume{\zone{k}{a}{A} \cap \ball{t}{a}}.
    \label{eqn:thm8}
  \end{align}
\end{theorem}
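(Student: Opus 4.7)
The plan is to exploit the tiling property of Brillouin zones together with the periodicity of $A$ to reduce the volume of the $k$-fold cover intersected with $U$ to a sum, over motif points, of volumes of balls clipped by the corresponding $k$-th Brillouin zones.

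The first step is a pointwise characterization. For almost every $x \in \Rspace^3$ (that is, away from the measure-zero boundaries of the Brillouin zones) there exists a unique $a \in A$ such that $x \in \zone{k}{a}{A}$; by definition this $a$ is precisely the $k$-th nearest point of $A$ to $x$. Consequently, $x$ lies in at least $k$ of the balls $\ball{t}{a'}$, $a' \in A$, if and only if the $k$-th nearest point is within distance $t$, i.e.\ $x \in \ball{t}{a}$. This yields the disjoint (up to measure zero) decomposition
\begin{align}
  \bigcup\nolimits^k A(t)  &=  \bigsqcup_{a \in A} \bigl( \zone{k}{a}{A} \cap \ball{t}{a} \bigr) .
\end{align}

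Next I would intersect both sides with $U$ and pass to volumes. Because of the periodicity $A = M + \Lambda$, for every $a \in M$ and every $v \in \Lambda$ one has $\zone{k}{a+v}{A} = \zone{k}{a}{A} + v$ and $\ball{t}{a+v} = \ball{t}{a} + v$, so grouping the sum over $A$ as a double sum over $a \in M$ and $v \in \Lambda$ and applying the translation-invariance of Lebesgue measure gives
\begin{align}
  \volume{U \cap \bigcup\nolimits^k A(t)}
    &= \sum_{a \in M} \sum_{v \in \Lambda} \volume{(U - v) \cap \zone{k}{a}{A} \cap \ball{t}{a}} .
\end{align}
The translates $U - v$, $v \in \Lambda$, tile $\Rspace^3$ up to a measure-zero set, so the inner sum collapses to $\volume{\zone{k}{a}{A} \cap \ball{t}{a}}$. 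Dividing by $\volume{U}$ yields the claimed formula.

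The only real subtlety is bookkeeping on measure-zero sets: the pointwise decomposition above holds only outside the union of the boundaries of all Brillouin zones of all points of $A$, and similarly the lattice tiling by translates of $U$ overlaps on a measure-zero set. I would therefore note once, at the outset, that each $\domain{k}{a}{A}$ is defined by finitely many strict half-space inequalities (with the $(k-1)$-st domain removed), so its boundary is a finite union of pieces of hyperplanes; by local finiteness of $A$, only finitely many of these zones meet any bounded set, hence the full exceptional set has Lebesgue measure zero and can be safely ignored in the volume computation. This is the main (and essentially only) technical point; the rest is a direct consequence of the definitions and Fubini/translation-invariance.
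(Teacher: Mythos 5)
Your proof is correct and follows essentially the same route as the paper: both rest on the observation that (away from a measure-zero set) $x$ lies in $k$ or more balls if and only if $x \in \ball{t}{a}$ for the unique $a$ with $x \in \zone{k}{a}{A}$, combined with the tiling property of the Brillouin zones. The only difference is presentational: the paper switches the fundamental domain from $U$ to the union of the $k$-th Brillouin zones of the motif points, whereas you keep $U$ and unfold the sum over the lattice by translation invariance --- which amounts to spelling out the paper's remark that any fundamental domain may be used.
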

\begin{proof}
  Let $M_k$ be the union of the $k$-th Brillouin zones of the points $a \in M$ and note that $M_k + \Lambda$ tiles $\Rspace^3$.
  It follows that $\volume{M_k} = \volume{U}$.
  Let $x \in M_k$ be in the interior of $\zone{k}{a}{A}$.
  By construction, $a$ is the unique $k$-closest point to $x$, so $x$ lies in $k$ or more balls if and only if $x \in \ball{t}{a}$.
  Summing over all points $a \in M$ gives \eqref{eqn:thm8}.
\end{proof}
Clearly, $\probA{0}{A} (t) = 1$ for all radii $t$.
Given $k \geq 0$ and $t \geq 0$, we use \eqref{eqn:thm8} to compute $\probA{k}{A} (t)$ and $\probA{k+1}{A} (t)$, and we get $\probB{k}{A} (t) = \probA{k}{A} (t) - \probA{k+1}{A} (t)$.
To implement \eqref{eqn:thm8}, we need to compute the volume of the intersection of a ball with a convex polyhedron.
We could, for example, decompose the polyhedron into tetrahedra and use explicit expressions for the volume of intersections between balls and simplices; see for example \cite{ABI88}.
A C++ implementation evaluating the density functions for a given periodic set using this strategy can be found at \cite{Sm20}.
Alternatively, we could use inclusion-exclusion, which allows for further consolidation of the formula, writing $\probA{k}{A} (t)$ as an alternating sum of common intersections of up to four balls each.
This does not lead to any asymptotic improvements of the running time, so we omit further details and refer to \cite{EdIg18} instead.

\subsection{Algorithm and Running Time}
\label{sec:Algorithm_and_Running_Time}

To evaluate the density functions $\probB{0}{A}, \probB{1}{A}, \ldots, \probB{k}{A}$ at a value $t$, we compute a plane arrangement for each point $a \in M$ that consists of enough planes so that the first $k+1$ Brillouin zones of $a$ occur.
Specifically, for a large enough radius, $s$, we consider for each $b \neq a$ in $A \cap \ball{s}{a}$ the \emph{bisector} of $b$ and $a$, which is the plane defined by $\Edist{x}{a} = \Edist{x}{b}$.
These bisectors decompose $\Rspace^3$ into convex cells.
We refer to this decomposition as the \emph{arrangement} of the planes.
The $3$-dimensional cells that are separated from $a$ by exactly $j-1$ planes form the \emph{$j$-th belt} of the arrangement.

\medskip
We now address the question how small we can choose $s$ such that the first $k+1$ belts are the first $k+1$ Brillouin zones of $a$.
To begin, we recall that $t \geq 2 R \sqrt[3]{k+1}$ implies that $\probB{i}{} (t) = 0$ for $0 \leq i \leq k$;
see the proof of Theorem \ref{thm:fingerprint_stability}.
To express this insight geometrically, let $R_{k+1} (a)$ be the maximum distance of a point in the $(k+1)$-st Brillouin zone of $a$ from $a$.
That the density functions $\probB{0}{A}$ to $\probB{k}{A}$ are zero for $t \geq 2R \sqrt[3]{k+1}$ implies $\probA{k+1}{A} (t) = 1$, for these values of $t$, and therefore $R_{k+1}(a) \leq 2 R \sqrt[3]{k+1}$.
To capture all the relevant planes, it thus suffices to consider all points $b \in A \setminus \{a\}$ at distance at most $s = 2 R_{k+1}(a)$ from $a$.
Using a straightforward volume argument, we see that $\ball{2 R_{k+1} (a)}{a}$ contains at most $(4 R \sqrt[3]{k+1} + r)^3 / r^3 = \mathcal{O}(k)$ points, in which we treat $r$ and $R$ as constants.

\medskip
Constructing the arrangement of $\mathcal{O}(k)$ planes incrementally, as described in \cite[chapter 7]{Ede87}, takes time $\mathcal{O}(k^3)$.
Doing this for each point in the motif takes time $\mathcal{O}(\card{M} \cdot k^3)$,
and within the same time bound we can evaluate the first $k+1$ density functions. 

\section{An Application to Crystal Structure Prediction} 
\label{sec:Appication}

Crystal Structure Prediction (CSP) aims to predict whether a selected molecule can exist as a functional material, i.e.\ 
a crystal with useful functions or properties. 
In other words, CSP seeks to answer the question of whether copies of a molecule can be arranged in such a way that the resulting crystal is stable (will not deform and lose its properties over time) as well as useful.
Crucially, CSP tries to answer this question without setting foot in a laboratory, with the hope of dramatically reducing the need to perform the time-consuming process of physically synthesizing crystals.

\begin{figure}[ht]
\vspace{0.1in}
\includegraphics[height=38mm]{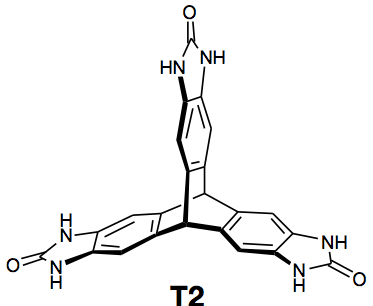}
\includegraphics[height=38mm]{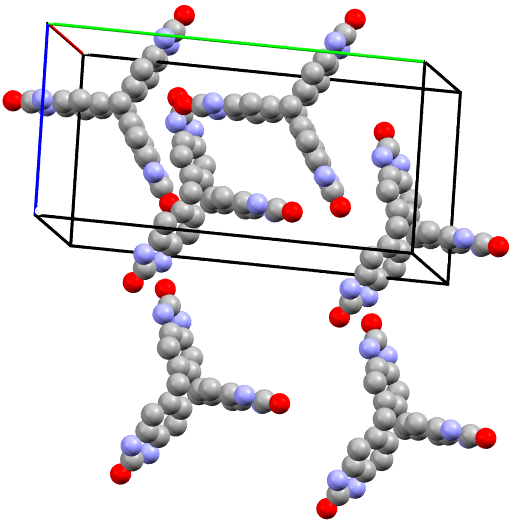} \hspace{0.1in}
\includegraphics[height=38mm]{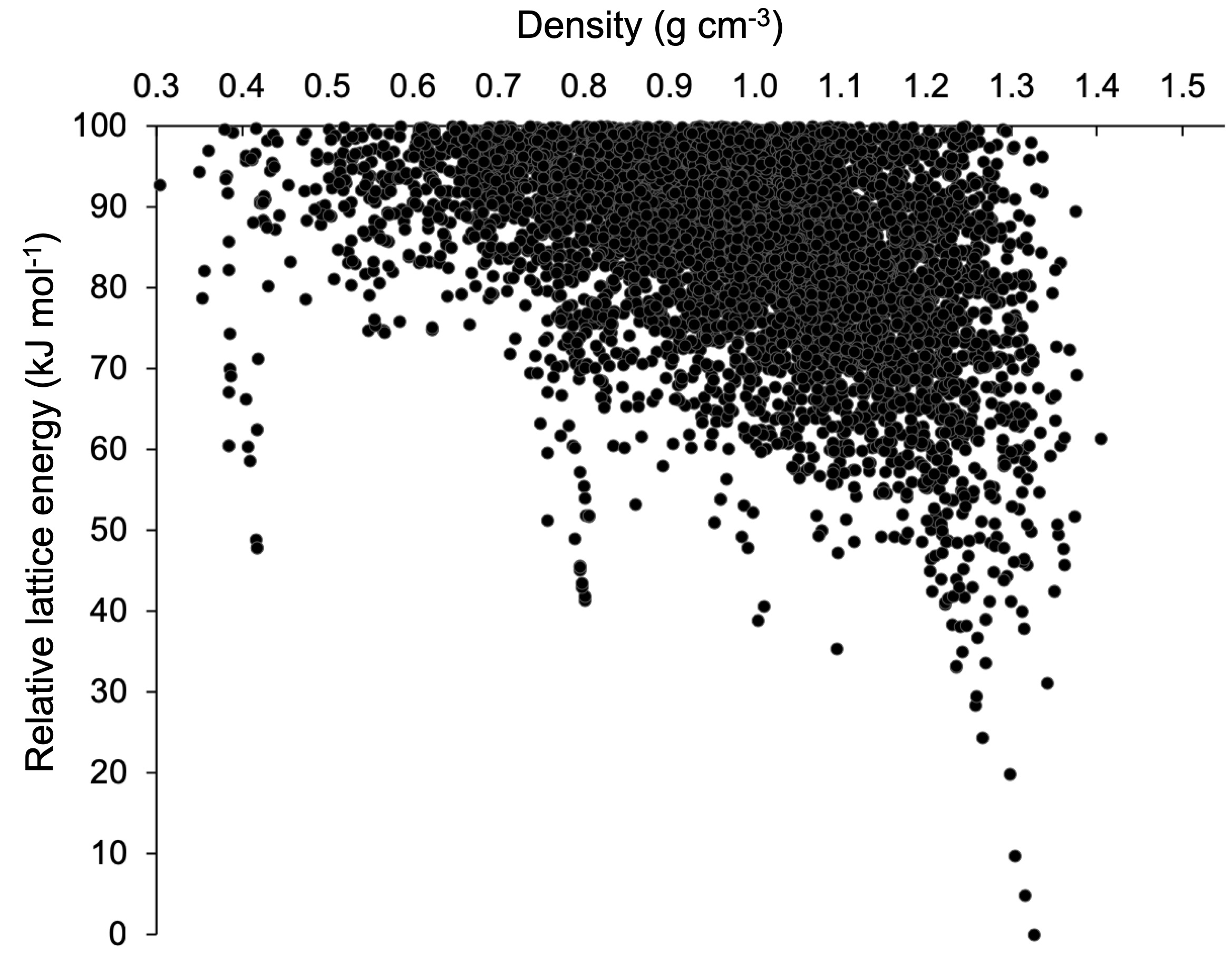}
\caption{\emph{Left}: a T2 molecule. 
  \emph{Middle}: the T2-$\delta$ crystal with highlighted unit cell. 
  \emph{Right}: the output of CSP for the T2 molecule.
  It is a plot of 5679 simulated T2 crystal structures \cite[Fig.~2d]{Pul17}, each represented by two coordinates: the physical density (atomic mass within a unit cell divided by the unit cell volume) and energy (determining the crystal's thermodynamic stability). 
  Structures at the bottom of the `downward spikes' are likely to be stable.}
\label{fig:T2molecule}
\end{figure}

\medskip
Our collaborators at Liverpool's Materials Innovation Factory \cite{Pul17} used CSP to predict that the T2 molecule (Figure~\ref{fig:T2molecule}) can be crystallized into a new structure that has half the physical density of the only previously known structure for T2, a desirable property for applications such as gas storage. As part of this process, they also identified four other structures of interest.
Following the CSP predictions, they synthesized $5$ families of T2-crystals in the laboratory by varying parameters like temperature and pressure, calling them T2-$\alpha$, T2-$\beta$, $\dots$, T2-$\epsilon$. One of them, T2-$\gamma$, indeed had the desired property of having only half the physical density of the previously known structure T2-$\alpha$.
They scanned the synthesized crystals using X-ray powder diffraction yielding Crystallographic Information Files, 
each containing the unit cell and the motif points representing the atoms. 
These files were then compared with the results of the simulations,  
either by using their physical density alongside the \textsc{Compack} algorithm---which compares only a finite portion of the structure---or by looking at visualizations 
of the crystal structures.
This comparison showed that the synthesized crystals matched the prediction well.
Our collaborators 
deposited these structures into the globally used Cambridge Structural Database.

\medskip
At a later time, we used our newly developed fingerprints to verify our collaborators' matchings between the synthesized crystals T2-$\alpha$ to T2-$\epsilon$ and the simulated crystals entry 99, 28, 62, 09, 01.
We did so by computing, for each of the five matches, the distance between the density functions of the synthesized and the simulated crystal.
As one is the prediction of the other, we expected to see small distances.
And for four of the five structures this was true: T2-$\gamma$, for example, always has an $L_{\infty}$-distance of less than $0.04$ over the first eight pairs of corresponding density functions; see Table~\ref{tab:T2distances}.
However, when we came to check the distances between density functions of T2-$\delta$ with its predicted structure, we were surprised to see large distances (the final row of Table~\ref{tab:T2distances}).
It turned out that a mix-up of files had happened, and what was uploaded to the Cambridge Structural Database as T2-$\delta$ was in fact T2-$\beta'$ (a crystal from the T2-$\beta$ family). 
The density fingerprint revealed this error, which was verified by chemists upon a visual inspection, and it is 
because of this that T2-$\delta$ was subsequently correctly deposited. 

\begin{table}[ht]
  \begin{center}
  \begin{tabular}{c||cccccccc}
    $\infdist{\probB{k}{A}}{\probB{k}{Q}}$ & $k=0$ & 1 & 2 & 3 & 4 & 5 & 6 & 7 \\
    \hline \hline
    T2-$\alpha$ vs entry 99   & 0.0042 & 0.0092 & 0.0125 & 0.0056 & 0.0099 & 0.0088 & 0.0127  & 0.0099 \\
    T2-$\beta$ vs entry 28    & 0.0157 & 0.0156 & 0.0159 & 0.0224 & 0.0334 & 0.0396 & 0.0357  & 0.0454 \\
    T2-$\gamma$ vs entry 62   & 0.0020 & 0.0080 & 0.0128 & 0.0155 & 0.0153 & 0.0250 & 0.0296  & 0.0391 \\
    T2-$\delta$ vs entry 09   & 0.0610 & 0.0884 & 0.1267 & 0.0676 & 0.0915 & 0.0801 & 0.0733  & 0.0388 \\
    T2-$\epsilon$ vs entry 01 & 0.0132 & 0.0152 & 0.0207 & 0.0571 & 0.0514 & 0.0431 & 0.0468  & 0.0550 \\ \hline
    T2-$\beta'$ vs entry 09   & 0.2981 & 0.2631 & 0.3718 & 0.3747 & 0.2563 & 0.2360 & 0.3161  & 0.3232
 \end{tabular}
 \end{center}
  \caption{\textit{First five rows:} the $L_\infty$-distances between the first eight pairs of corresponding density functions of physically synthesized T2 crystals (T2-$\alpha$, T2-$\beta$, etc.) and the simulated structures that had predicted them from the CSP output dataset (entry XX).
  \textit{Last row:} the suspiciously larger numbers revealed the mix-up of the files T2-$\delta$ and T2-$\beta'$ and thus led to depositing the initially omitted Crystallographic Information File of the T2-$\delta$ crystal into the Cambridge Structural Database. 
  }
  \label{tab:T2distances}
\end{table}

Plots of the density functions of correctly matched synthesized and simulated structures can be seen in Figure~\ref{fig:T2densities}. 
As another application, we expect that the fingerprint will 
be used 
to simplify the large output data sets produced by CSP by comparing simulated structures with each other, thus speeding up what is currently a slow process.

\newcommand{\cheight}{34mm}

\begin{figure}[!ht]
  \centering
\includegraphics[height=\cheight]{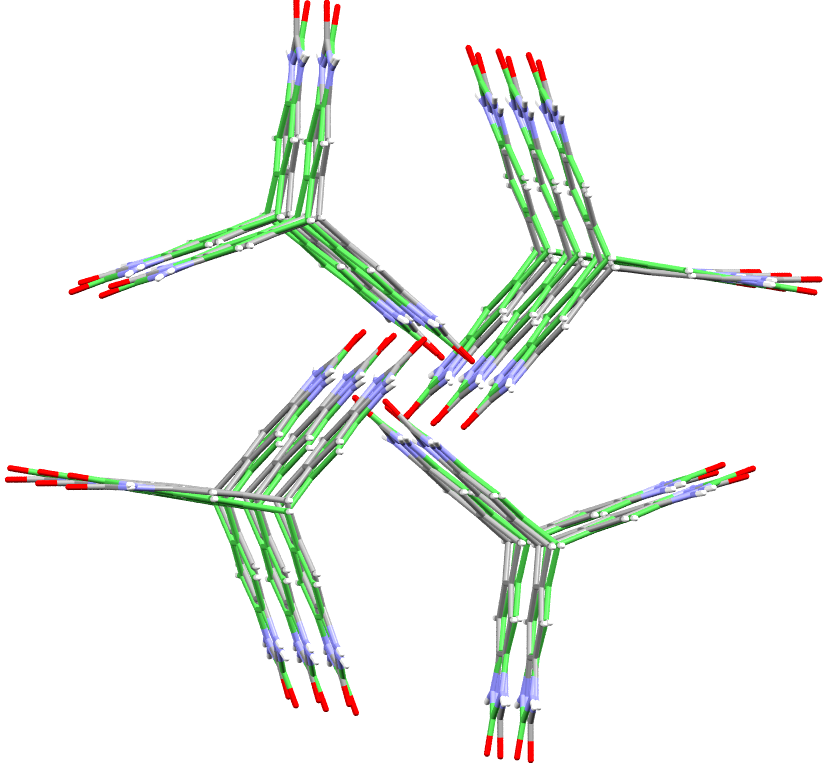} \hfill
  \includegraphics[height=\cheight]{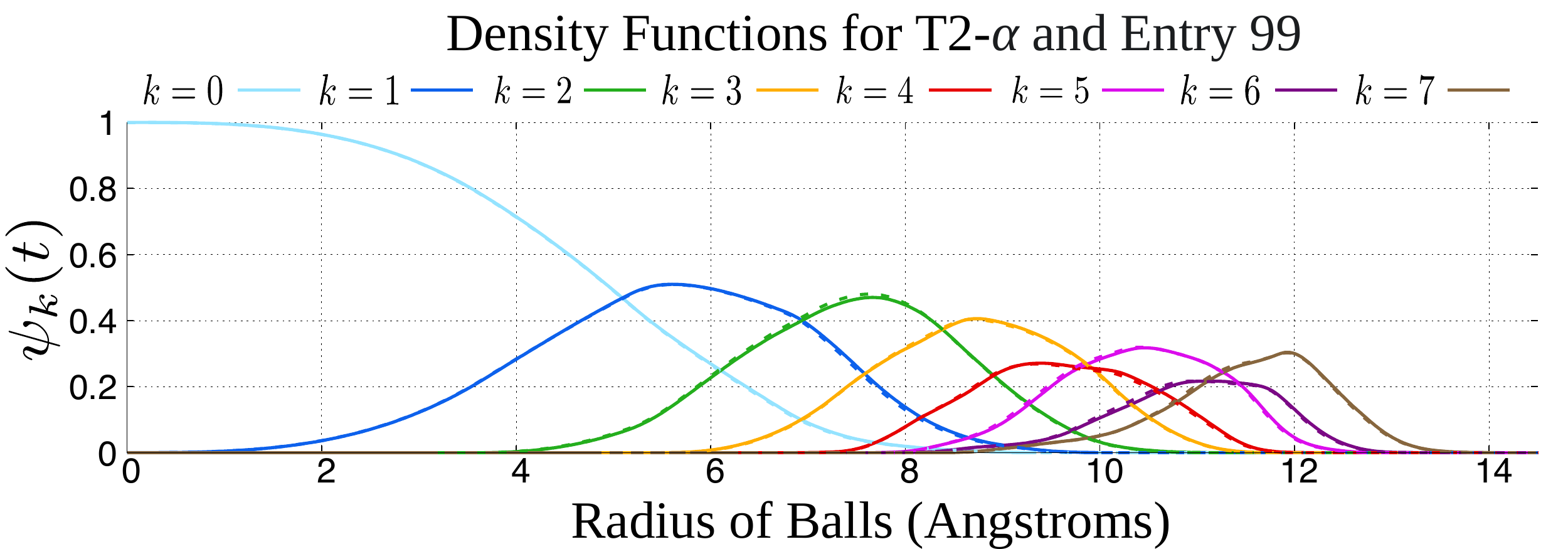} 
\smallskip
  
\includegraphics[height=\cheight]{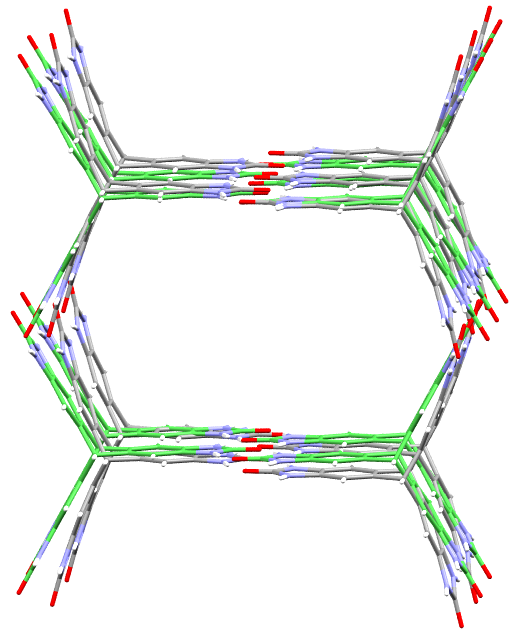} \hfill
  \includegraphics[height=\cheight]{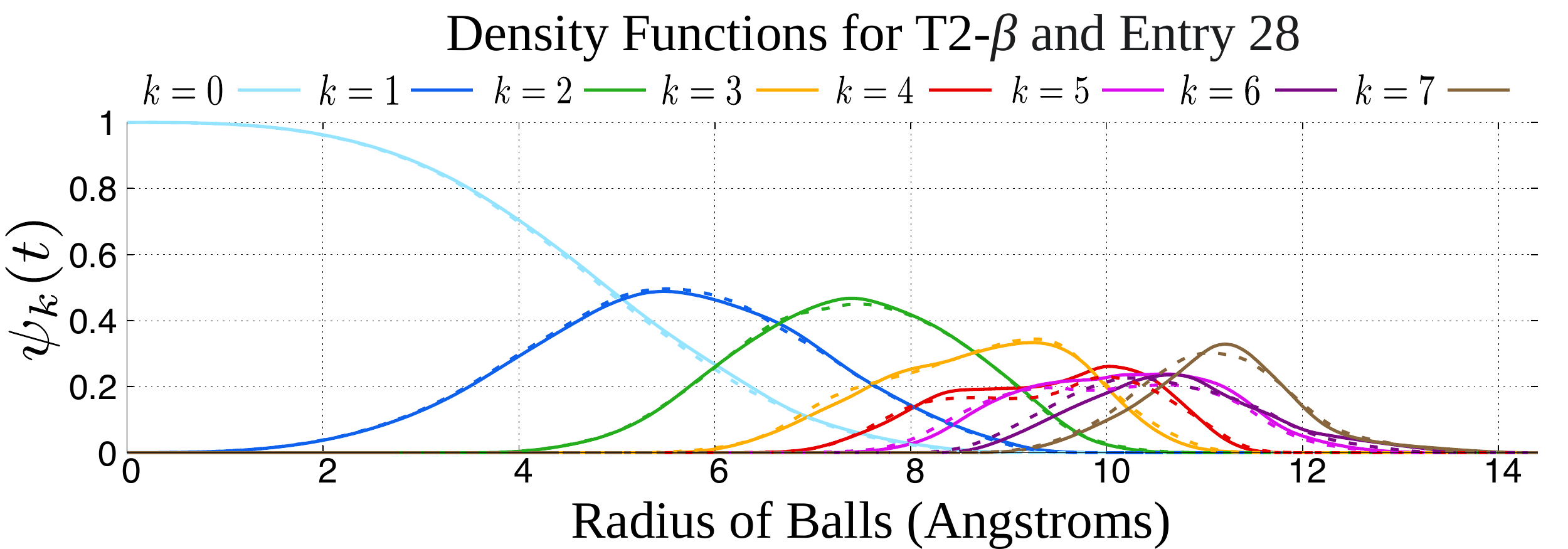} 
\medskip

\includegraphics[height=\cheight]{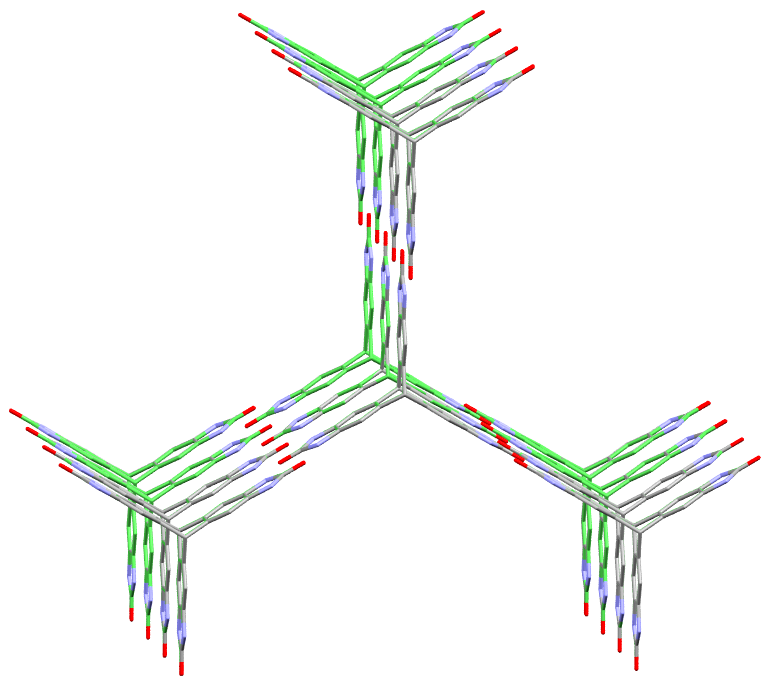} \hfill
  \includegraphics[height=\cheight]{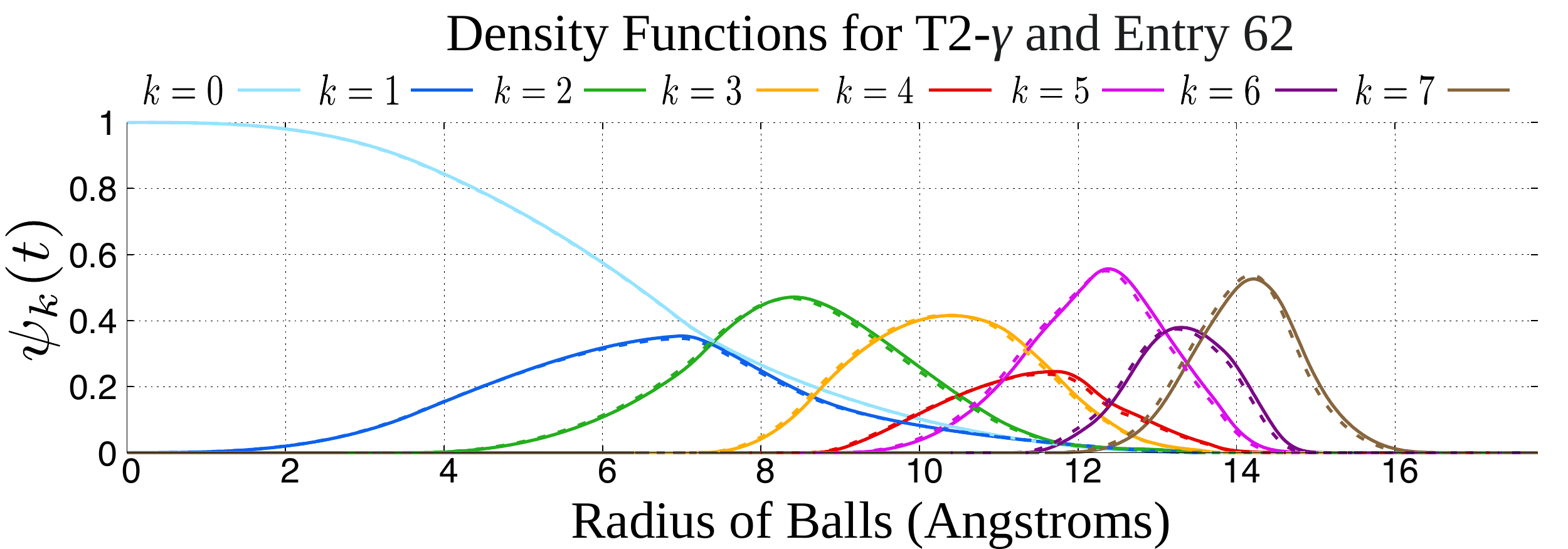} 
\medskip
  
\includegraphics[height=\cheight]{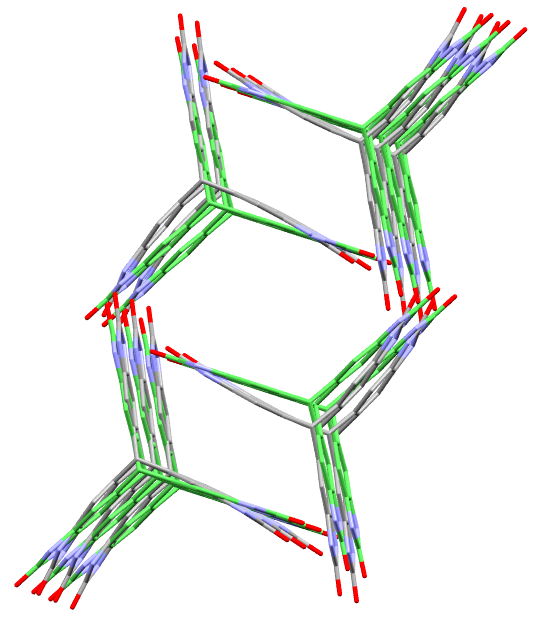} \hfill
  \includegraphics[height=\cheight]{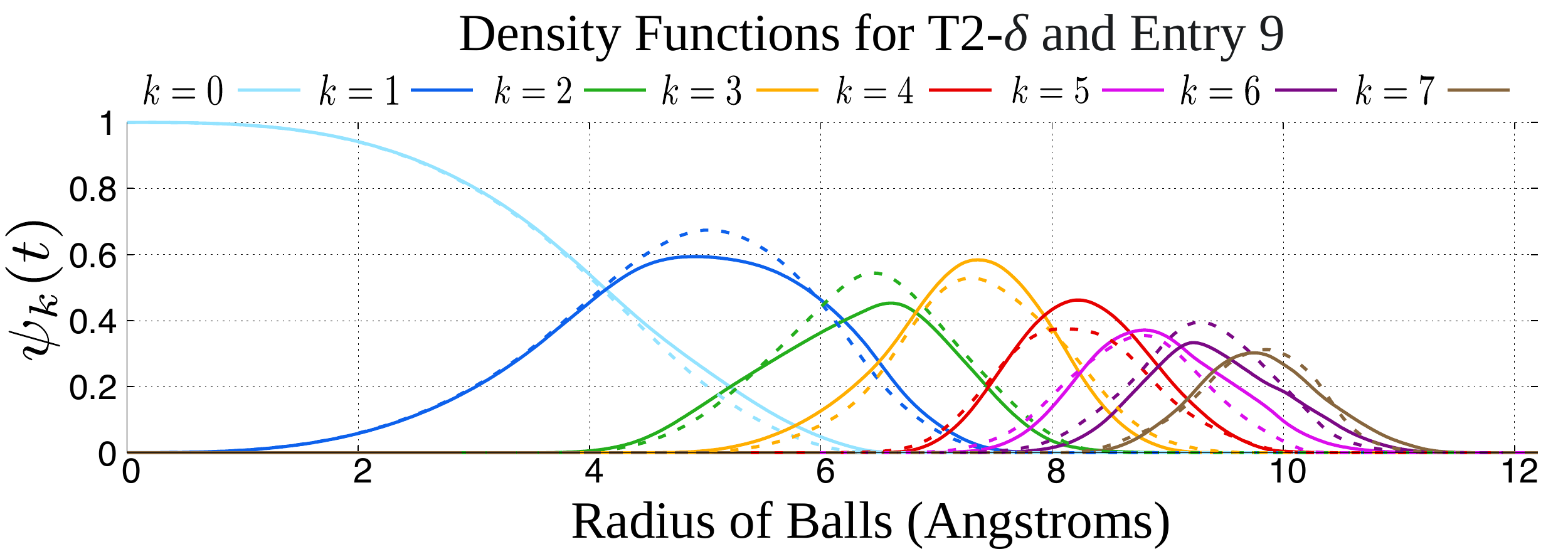}
\medskip
  
\includegraphics[height=\cheight]{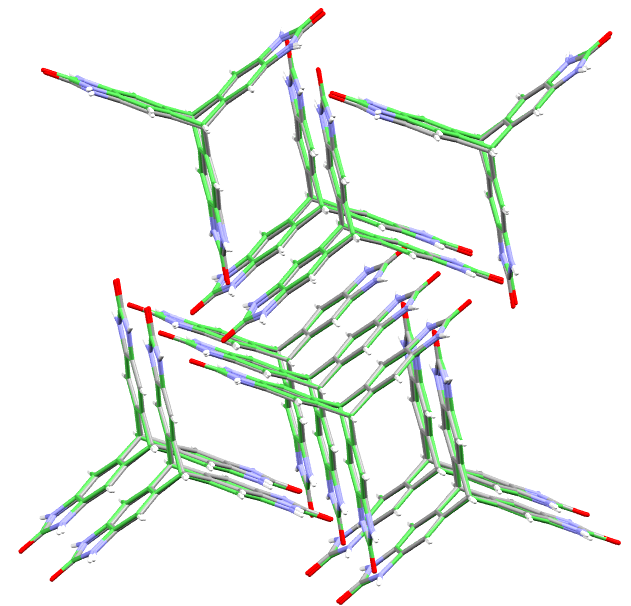} \hfill
  \includegraphics[height=\cheight]{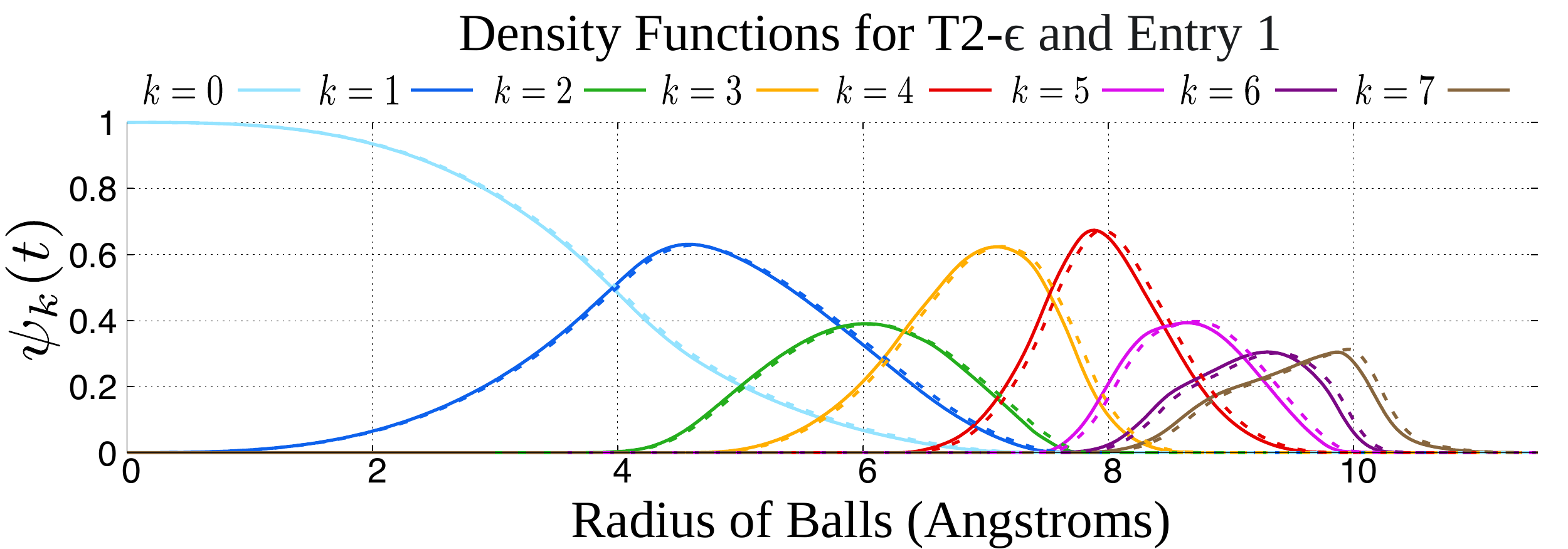}
\caption{
  \emph{Left}: experimental T2 crystals (curved gray molecules) and their simulated versions (straight green molecules) overlaid. 
  \emph{Right}: the density functions of the periodic sets of molecular centres of the experimental T2 crystals (solid curves) vs.\ simulated crystals (dashed curves).}
\label{fig:T2densities}
\end{figure}

\section{Discussion} 
\label{sec:Discussion}

The main contribution of this paper is a fingerprint map from periodic sets in $\Rspace^3$ (which model crystals) to series of density functions.
This map is obviously invariant under isometries, and we prove it is continuous
and generically complete.
We leave the completeness without genericity assumption as an open question.
In this context, it is worth noticing that our proof of generic completeness makes only limited use of the order, $k$, at which the circumradius of an edge, triangle, or tetrahedron is detected.
Recall that the order is the number of points in the respective circumsphere.
Is this additional information sufficient to prove general completeness?

\medskip
A drawback of the bottleneck distance between periodic sets used in this paper is its sensitivity to changes of the unit cell; see Lemma~\ref{lem:common_lattice}.
An alternative dissimilarity that may be more relevant in practice considers affine transformations, $\tau$, that minimize the bottleneck distance:
\begin{align}
  d_{\rm AT} (A, Q)  &=
    \inf_{\tau} \max \{
    \min \{ \dBottleneck{A}{\tau (Q)} ,
            \dBottleneck{\tau (A)}{Q} \} ,
    |\log s_1|, |\log s_3| \},
\end{align}
in which $s_1 \geq s_2 \geq s_3$ are the three singular values of the matrix of $\tau$.
Is the density fingerprint map defined in Section \ref{sec:Fingerprint} continuous with respect to this dissimilarity?

\medskip
We close this paper with three extensions of the results presented in this paper.
Different types of atoms are often modeled as balls with different radii.
A possible geometric formalism is that of weighted points and the power distance \cite{Aur87}.
Our geometric results generalize to this setting, but some need a careful adaptation.
Our continuity result for periodic sets (Theorem \ref{thm:fingerprint_stability}) also generalizes to non-periodic Delone sets that allow for a reasonable definition of density functions.
Considering that quasiperiodic crystals can be modeled as such, it might be worthwhile to find out how far such an extension can be pushed.
Finally, we mention that our results generalize to
arbitrary finite dimension.

\bibliography{biblio}

\clearpage

\end{document}